\documentclass[twoside,leqno]{article}

\usepackage[letterpaper]{geometry}

\usepackage{ltexpprt}
\usepackage{hyperref}
\usepackage{verbatim}

\usepackage[square,numbers]{natbib}
\bibliographystyle{abbrvnat}

\usepackage{todonotes}
\usepackage{regexpatch}
\makeatletter
\xpatchcmd{\@todo}{\setkeys{todonotes}{#1}}{\setkeys{todonotes}{inline,#1}}{}{}
\makeatother

\usepackage{amsmath}
\usepackage{amsfonts}
\usepackage{mathtools}
\usepackage{thmtools}
\usepackage{hyperref}
\usepackage{natbib}
\usepackage{subfig}
\usepackage{graphicx}
\usepackage{svg}
\usepackage{enumerate}
\usepackage{csquotes}
\usepackage{booktabs}
\usepackage{multirow}

\newtheorem{observation}{Observation}[section]

\newproof{Proofof}{Proof of \autoref{lem:small_valid_submultisets}}
\newproof{Proofsketch}{Proof sketch for Lemma~\ref{lem:small_valid_submultisets}}

\newcommand{\rr}{\mathbb{R}}
\newcommand{\zz}{\mathbb{Z}}
\newcommand{\nn}{\mathbb{N}}
\newcommand{\qq}{\mathbb{Q}}

\newcommand{\norm}[1]{\left\lVert #1 \right\rVert}
\newcommand{\abs}[1]{\left\lvert #1 \right\rvert}

\newcommand{\cone}{cone}

\newcommand{\etal}{et al.}
\newcommand{\ie}{i.\,e.}

\newcommand{\eg}{e.\,g.}

\usepackage{mdframed}
\newmdenv[
  topline=false,
  bottomline=false,
  rightline=false,
  skipabove=\topsep,
  skipbelow=\topsep,
  innertopmargin=0pt,
  innerbottommargin=0pt,
  innerleftmargin=10pt,
  innerrightmargin=0pt,
  middlelinecolor=halfgray,
]{siderules}
\newcommand{\claim}[2]
    {
    \begin{siderules}
    \textbf{Claim:}#1 \smallskip

    \noindent\textit{Proof of the claim:}#2
    \hfill {$\vartriangleleft$}
    \end{siderules}
    }

\date{}

\begin{document}

\newcommand\relatedversion{}

\title{\Large Collapsing the Tower - On the Complexity of Multistage Stochastic IPs\relatedversion}
\author{Kim-Manuel Klein\thanks{University of Kiel. Supported by German Research Foundation (DFG) project KL 3408/1-1.}
\and Janina Reuter\thanks{University of Kiel.}}

\maketitle

\begin{abstract}
    In this paper we study the computational complexity of solving a class of block structured integer programs (IPs) - so called multistage stochastic IPs.
    A multistage stochastic IP is an IP of the form $\max \{ c^T x \mid \mathcal{A} x = b, \,l \leq x \leq u,\, x\text{ integral} \}$ where the constraint matrix $\mathcal{A}$ consists of small block matrices ordered on the diagonal line and for each stage there are larger blocks with few columns connecting the blocks in a tree like fashion.
    Over the last years there was enormous progress in the area of block structured IPs. For many of the known block IP classes - such as $n$-fold, tree-fold, and two-stage stochastic IPs, nearly matching upper and lower bounds are known concerning their computational complexity.
    One of the major gaps that remained however was the parameter dependency in the running time for an algorithm solving multistage stochastic IPs. Previous algorithms require a tower of $t$ exponentials, where $t$ is the number of stages, 
    while only a double exponential lower bound was known.
    In this paper we show that the tower of $t$ exponentials is actually not necessary. We can show an improved running time for the algorithm solving multistage stochastic IPs with a running time of
    $2^{(d\norm{A}_\infty)^{\mathcal{O}(d^{3t+1})}} \cdot poly(d,n)$, 
    where $d$ is the sum of columns in the connecting blocks and $n$ is the number of blocks on the lowest stage. 
    Hence, we obtain the first bound by an elementary function for the running time of an algorithm solving multistage stochastic IPs.
    In contrast to previous works, our algorithm has only a triple exponential dependency on the parameters and only doubly exponential for every constant $t$. By this we come very close the known double exponential bound (based on the exponential time hypothesis) that holds already for two-stage stochastic IPs, i.e. multistage stochastic IPs with only two stages.
    
    The improved running time of the algorithm is based on new bounds for the proximity of multistage stochastic IPs. 
    The idea behind the bound is based on generalization for a structural lemma originally used for two-stage stochastic IPs. While the structural lemma requires iteration to be applied to multistage stochastic IPs, our generalization directly applies to inherent combinatorial properties of multiple stages. Already a special case of our lemma yields an improved bound for the Graver Complexity of multistage stochastic IPs.
\end{abstract}

\section{Introduction}\label{sec:introduction}
We consider (integer) linear programs $P=(x,A,b,c)$ of the form
\begin{equation} \label{IP:main}
\begin{split}
\min c^tx \\
Ax=b \\
x\geq 0
\end{split}
\end{equation}
for a \emph{constraint matrix} $A\in\zz^{m\times n}$ with a specific structure, a \emph{right hand side} vector $b\in\zz^m$, an \emph{optimization goal} $c\in \zz^n$, and a vector $x$ of $n$ variables. The constraint matrix $A$  has non-zero entries in a structure similar to \autoref{fig:multistage}.
The matrix is structured in blocks of multiple stages with the following properties. Blocks of the same stage have distinct rows and columns and for any lower stage block the set of rows is a subset of the rows of a block in the next higher stage. The subset relation on the rows induces a tree-like structure as indicated by arrows.
\begin{figure}
\centering
    \includegraphics[width=0.5\linewidth]{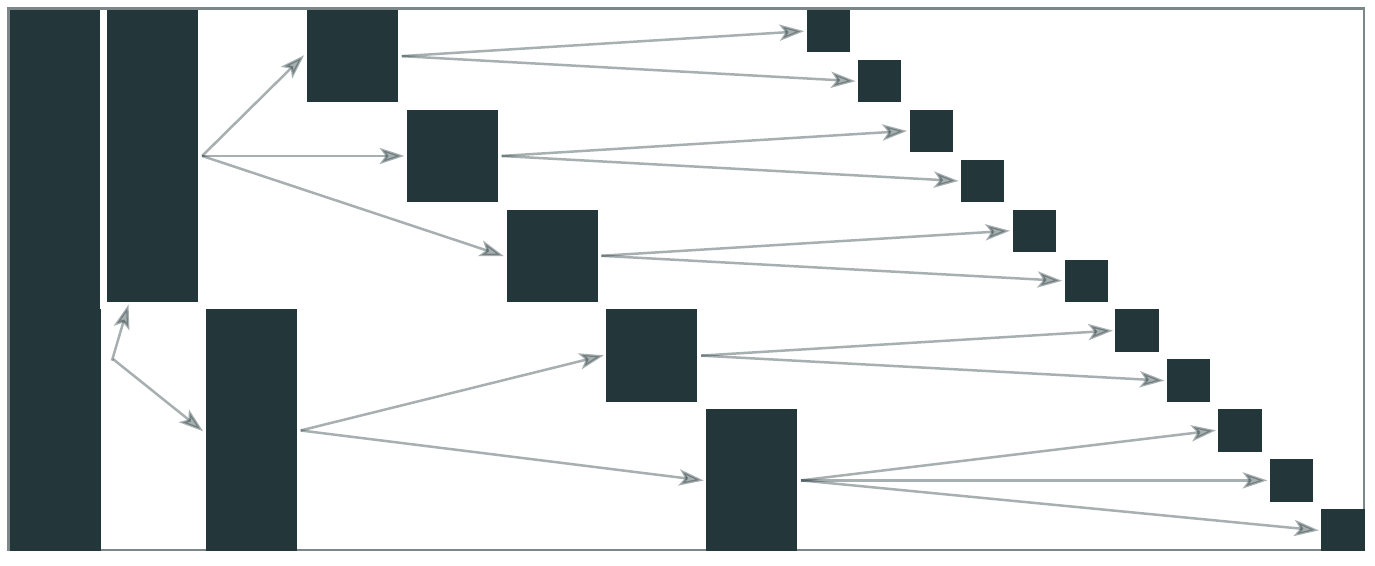}
    \caption{ The structure of nonzero entries in a multistage stochastic matrix is denoted by filled rectangles. Rows of the blocks are connected by a tree indicated by arrows. }
    \label{fig:multistage}
\end{figure}

A famous special case of multistage stochastic IPs are two-stage stochastic IPs, where the constraint matrix $A$ consists only of two stages, a vertical line of block matrices $A^{(i)}\in\zz^{t \times r}$ and a diagonal line of block matrices $B^{(i)}\in \zz^{t \times s}$, i.e.

\begin{align*}
    A = \begin{pmatrix}
A^{(1)} & B^{(1)} & 0 & \cdots & 0\\
A^{(2)} & 0 & B^{(2)} & \ddots &  \vdots \\
\vdots & \vdots & \ddots & \ddots &  0 \\
A^{(n)} & 0 & \cdots & 0 & B^{(n)}
\end{pmatrix}.
\end{align*}

Multistage stochastic IPs appear in many real-world problems especially when problems involve uncertainty over time. This particular matrix structure models
\enquote{decisions that occur at different points in time so that the problem can be viewed as having multiple stages of observations and actions}~\cite{birge2011introduction}.
Typically either there are decisions required before all information is revealed~\cite{aircrafts,vehicle_routing} or postponing decisions increases potential costs~\cite{facility, nurses}. A parent block models a decision and its child blocks model all scenarios that might occur in future. The quality of the decision made in a parent block depends on the occurring scenario.
The areas of application include for example worker scheduling~\cite{nurses, scheduling_cross-trained_workers}, project planning~\cite{project_financing, project_planning}, routing problems~\cite{aircrafts, vehicle_routing, droplet_routing}, and facility location planning~\cite{facility}.

\subsection{Related Results}
Over the last years there was enormous progress in the development of algorithms solving block IPs, see \autoref{tab:related_work_block_ips} for an overview. 
In the theoretical context there are numerous problems which can be modeled as a block IP and solved more efficiently by algorithms solving the block IP. Faster algorithms for these block structured IPs thus immediately improve the running time for other problems. Applications include string algorithms~\cite{DBLP:conf/esa/KnopKM17}, social choice games~\cite{DBLP:journals/teco/KnopKM20}, scheduling~\cite{DBLP:conf/innovations/JansenKMR19, DBLP:journals/scheduling/KnopK18, DBLP:conf/spaa/JansenLM20}, and bin packing problems~\cite{DBLP:journals/corr/abs-1905-09750}. 

\begin{table}
\centering
\begin{tabular}{ll} 
\toprule
\multicolumn{2}{c}{Overview of results for block IPs}\smallskip\\  
\midrule 
\multicolumn{1}{c}{$n$-fold} & \multicolumn{1}{c}{Two-stage stochastic} \\
\cmidrule(lr){1-1}\cmidrule(lr){2-2}
$\Delta^{\mathcal{O}(t(rs + st))} \cdot n^3\varphi$~\cite{DBLP:journals/mp/HemmeckeOR13} 
& $f_1(r,s,t,\Delta)\cdot poly(n,\varphi)$~\cite{twostagehemmeckeschultz}\\
$(rs\Delta)^{\mathcal{O}(rs^2+sr^2)} \cdot (nt)^2\varphi$~\cite{eisenbrand2018faster} 
& $(rt\Delta)^{(rt\Delta)^{\mathcal{O}(r^2t)}} \cdot (ns)^2\log(ns)\varphi$~\cite{DBLP:conf/ipco/Klein20}\\
 $\Delta^{\mathcal{O}(r^s+sr^2)}\cdot(nt)^3 +\varphi$~\cite{DBLP:conf/icalp/KouteckyLO18} 
& $2^{(2\Delta)^{\mathcal{O}(r^2s+rs^2)}} \cdot n\log^3(n)\varphi$~\cite{AnAlgorithmicTheoryofIntegerProgramming} \\
 $(rs\Delta)^{\mathcal{O}(r^2s+s^2)} \cdot nt\log^{\mathcal{O}(1)}(nt)\varphi^2$~\cite{DBLP:journals/siamdm/JansenLR20}
& $2^{(2\Delta)^{\mathcal{O}(r^2s+rs^2)}} \cdot n^2\log^5n$~\cite{AnAlgorithmicTheoryofIntegerProgramming}\\
$\Delta^{\mathcal{O}(r^2s+rs^2)} \cdot nt\log(nt)\varphi$~\cite{AnAlgorithmicTheoryofIntegerProgramming} 
& $2^{(2\Delta)^{\mathcal{O}(r(r+s))}}\cdot nt\log^{\mathcal{O}(rs)}(nt)$~\cite{DBLP:journals/corr/abs-2012-11742_esa}\\
$\Delta^{\mathcal{O}(r^2s+rs^2)} \cdot (nt)^2\log^3(nt)$~\cite{AnAlgorithmicTheoryofIntegerProgramming} &\\
$(rs\Delta)^{\mathcal{O}(r^2s+rs^2)} \cdot (nt)^{1+o(1)}$~\cite{DBLP:conf/soda/CslovjecsekEHRW21} 
&\smallskip\\
\midrule
\multicolumn{1}{c}{Treefold} & \multicolumn{1}{c}{Multistage stochastic} \\
\cmidrule(lr){1-1}\cmidrule(lr){2-2}
$f_2(t,d,r)\cdot n^3\varphi$~\cite{DBLP:conf/soda/ChenM18} 
& $f_3(d,\Delta,t)\cdot n^3\varphi$~\cite{aschenbrennerhemmeckemultistage}\\
$(d\Delta)^{\mathcal{O}(d^t)} \cdot (rn)^2\log^2(rn)\varphi$~\cite{eisenbrand2018faster} 
& $f_4(d,r,\Delta,t) \cdot (nd)^2\varphi\log^2(nd)$~\cite{DBLP:conf/ipco/Klein20}\\
$(d\Delta)^{\mathcal{O}(d^t)} \cdot (rn)^2\log(rn)\varphi$~\cite{AnAlgorithmicTheoryofIntegerProgramming}
& $f_4(d,r,\Delta, t) \cdot n^2\varphi$\cite{AnAlgorithmicTheoryofIntegerProgramming}\\ 
$(d\Delta)^{\mathcal{O}(d^t)} \cdot (rn)^2\log^3(rn)$~\cite{AnAlgorithmicTheoryofIntegerProgramming}
& $f_4(d,r,\Delta, t) \cdot n^3\log^2n$\cite{AnAlgorithmicTheoryofIntegerProgramming} \\
& $f_5(d,\Delta,t)\cdot rn\log^{\mathcal{O}(2^d)}(rn)$~\cite{DBLP:journals/corr/abs-2012-11742_esa}\\
\bottomrule
\end{tabular}
\caption{In the running times, $\Delta$ denotes the maximum absolute entry of constraint matrix $A$ and $\varphi$ denotes the input length. The functions $f_1$ and $f_3$ are computable and lower bounded by ackerman's function. Function $f_2$ is computable. The functions $f_4$ and $f_5$ involve a tower of exponents of height $t$.}
\label{tab:related_work_block_ips}
\end{table}

Particularly useful for modelling other problems has been a block structure called $n$-fold. This block structure considers the transpose of two-stage stochastic matrices, \ie~the constraint matrix consists of a horizontal line of $n$ blocks $A^{(i)}\in \zz^{r \times t}$ and a diagonal line of $n$ blocks $B^{(i)}\in \zz^{s\times t}$ underneath. Algorithms for $n$-fold IPs are single exponential in the block dimensions, see \eg~\cite{DBLP:conf/soda/CslovjecsekEHRW21}. Though closely related, $n$-fold IPs and two-stage stochastic IPs greatly differ in their complexity. In contrast to the single exponential dependency for $n$-folds, a double exponential lower bound for the dependency on the block dimensions of two-stage stochastic IPs was recently shown under the exponential time hypothesis~\cite{twostagelowerbound}.
The lower bound is complemented by algorithms with double exponential dependency on the block dimensions~\cite{DBLP:conf/ipco/Klein20,AnAlgorithmicTheoryofIntegerProgramming, DBLP:journals/corr/abs-2012-11742_esa}. 

The transpose of multistage stochastic matrices are treefold matrices. Multistage stochastic IPs and treefold IPs are tree-like structured generalizations of two-stage stochastic IPs and $n$-fold IPs, respectively.
We denote the number of stages by $t$ and the number of blocks on the lowest stage by $n$. In multistage stochastic matrices (treefold matrices) we denote the sum of column (row) dimensions for each stage by $d$ and the number of rows (columns) of the lowest stage by $r$.
Recent work~\cite{eisenbrand2018faster} has shown that for treefold IPs the running time dependency on the block sizes, largest matrix entry, and number of stages behaves similarly than for $n$-fold IPs. It is double exponential but the second exponent only depends on the number of stages and thus the bound is single exponential for any fixed number of stages. 
In contrast, prior results for two-stage and multistage stochastic IPs had a greater gap in their algorithmic complexity. The current best algorithm~\cite{DBLP:journals/corr/abs-2012-11742_esa} for multistage stochastic IPs involves a tower of exponents, where the height of the tower is the number of stages. 
In particular, for any fixed number $k$ of stages the dependency on the block dimensions and largest matrix entry is $k$-fold exponential.

\section{Our Contribution}\label{sec:theorems}
The main result of our paper is to show that multistage stochastic IPs can be solved in time $2^{(d\Delta)^{\mathcal{O}(d^{3t+1})}}\cdot (rn)^{1+o(1)}$. By this we come very close the known double exponential lower bound of $2^{2^{o(d)}}\cdot poly(n)$~\cite{twostagelowerbound} (based on the exponential time hypothesis) that holds already for two-stage stochastic IPs. Our main ingredient to show the improved running time is by a generalization of a structural lemma of Klein which is the key component in \cite{DBLP:conf/ipco/Klein20} for the complexity bound of two-stage stochastic IPs. 

\begin{lemma}[\cite{DBLP:conf/ipco/Klein20}]\label{lem:Kims_lemma}
Given multisets $T_1, \dots, T_n\subset \zz^d_{\geq 0}$ where all elements $\tau\in T_i$ have bounded size $\norm{\tau}_{\infty}\leq \Delta$. Assuming that the total sum of all elements in each set is equal, \ie 
\begin{equation*}
    \sum_{\tau \in T_1}\tau =\,\dots\,= \sum_{\tau \in T_n}\tau
\end{equation*}
then there exist nonempty submultisets $S_1\subseteq T_1, \dots, S_n\subseteq T_n$ of bounded size $\abs{S_i}\leq (d\Delta)^{\mathcal{O}(d\Delta^{d^2})}$ such that 
\begin{equation*}
    \sum_{s \in S_1}s =\,\dots\,= \sum_{s \in S_n}s.
\end{equation*}
\end{lemma}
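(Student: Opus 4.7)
Since every element lies in $\{0,1,\dots,\Delta\}^d$, there are at most $K \le (\Delta+1)^d$ distinct vectors that can appear in any $T_i$. Enumerate them as $v_1,\dots,v_K$ and form the matrix $V\in\zz^{d\times K}$ with these columns. Encode each $T_i$ by its multiplicity vector $x^{(i)}\in\zz_{\ge 0}^K$, so that the hypothesis becomes $Vx^{(1)}=\dots=Vx^{(n)}=b$. Finding the required $S_i$ is equivalent to producing $y^{(i)}\in\zz_{\ge 0}^K$ with $0\ne y^{(i)}\le x^{(i)}$ (componentwise) and $Vy^{(1)}=\dots=Vy^{(n)}$, with $\norm{y^{(i)}}_1 \le (d\Delta)^{\mathcal{O}(d\Delta^{d^2})}$.

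First I would isolate an easy case by pigeonhole on frequent columns. Call $v_j$ \emph{common} if $x^{(i)}_j \ge 1$ for every $i$. If some column is common, take $y^{(i)}=e_j$ and the lemma is trivial. The real work is when certain columns are missing in some $T_i$: then any attempt to reuse a $v_j$ across all $T_i$ must be compensated by a \emph{replacement} submultiset inside each $T_i$ that does not contain $v_j$, and the sizes of these replacements must be controlled.

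The core ingredient is a two-multiset version of the lemma: given $T_i,T_j$ with $Vx^{(i)}=Vx^{(j)}$, there exist nonempty $y^{(i)}\le x^{(i)}$ and $y^{(j)}\le x^{(j)}$ with $Vy^{(i)}=Vy^{(j)}$ and $\norm{y^{(i)}}_1,\norm{y^{(j)}}_1 \le \Delta^{\mathcal{O}(d^2)}$. To prove this, I would look at the difference $z:=x^{(i)}-x^{(j)}$, which lies in the kernel of $[V \mid -V]$, and decompose $z$ into sign-compatible elements of the Graver basis of that matrix. The Graver elements have $\ell_\infty$-norm at most $(d\Delta)^{\mathcal{O}(d)}$ and therefore $\ell_1$-norm at most $\Delta^{\mathcal{O}(d^2)}$; one such Graver atom yields the desired pair of submultisets.

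Passing from pairs to all $n$ sets is where the exponent blows up to $d\Delta^{d^2}$. The idea is to pick a small ``candidate target'' vector $c\in\zz^d_{\ge 0}$, to be written as $c=Vy^{(i)}$ inside every $T_i$ simultaneously. Enumerate candidates by applying the two-multiset argument between $T_1$ and each other $T_i$ separately; each yields a collection of at most $\Delta^{\mathcal{O}(d^2)}$ possible sums, so a pigeonhole over the at most $(d\Delta)^{\Delta^{\mathcal{O}(d^2)}}$ many possible target vectors forces a single $c$ to be simultaneously realisable. Collecting the realising $y^{(i)}$ gives submultisets of size bounded by $(d\Delta)^{\mathcal{O}(d\Delta^{d^2})}$.

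\textbf{Main obstacle.} The delicate point is the synchronisation across all $n$ multisets: naively iterating the pairwise statement would cascade the size through $n$ steps and destroy the uniform bound. Avoiding this cascade requires fixing a common $c$ in advance from a small candidate pool whose cardinality is still double-exponential in $d$ (this is where the $d\Delta^{d^2}$ in the exponent appears) and then proving in one shot that every $T_i$ admits a short representation of $c$. Getting the exponent right, rather than merely finite, is the technical heart of the argument and is what prevents the bound from becoming tower-type.
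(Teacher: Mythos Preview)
Your synchronisation step does not work as stated. From the pairwise lemma you obtain, for each $i$, \emph{some} target $c_i$ (of bounded norm) that is realisable in both $T_1$ and $T_i$; but nothing forces these $c_i$ to coincide. A pigeonhole over the candidate pool of size $(d\Delta)^{\Delta^{\mathcal{O}(d^2)}}$ would at best tell you that, when $n$ exceeds that size, some target is shared by \emph{many} of the $T_i$ --- not by all of them, and the bound must be uniform in $n$. In particular, ``forces a single $c$ to be simultaneously realisable'' is exactly the statement you are trying to prove, and the argument as written assumes it. This is the genuine gap.

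The approach used in the paper (and in Klein's original) avoids the pairwise detour entirely via a geometric observation. Each multiplicity vector $x^{(i)}$ lies in the polyhedron $\{x\ge 0: Vx=b\}$; by Carath\'eodory one finds for every $i$ a basis $B^{(i)}$ of at most $d$ columns of $V$ with $b\in\cone(B^{(i)})$ and $\tfrac{1}{d+1}(B^{(i)})^{-1}b\le x^{(i)}$. The key point is that the common sum $b$ lies in the \emph{intersection} $\bigcap_i\cone(B^{(i)})$, which is again a rational polyhedral cone; by Farkas--Minkowski--Weyl its generating set can be taken integral with entries bounded by $(d\Delta)^{d^2}$. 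Writing $b$ in these generators and selecting one with a large coefficient yields a single bounded vector $\hat b$ that is, by construction, a nonnegative combination of the columns of \emph{every} $B^{(i)}$, hence fractionally realisable inside every $T_i$ at once. Iterating and clearing denominators (via $\mathrm{lcm}(1,\dots,(d\Delta)^d)$) produces the integral submultisets with the stated bound. The cone--intersection step is precisely the missing mechanism in your plan; neither the Graver decomposition nor the candidate--pool pigeonhole supplies a substitute for it.
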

The structural lemma is also applied in state-of-the-art algorithms for multistage stochastic IPs~\cite{DBLP:conf/ipco/Klein20,DBLP:journals/corr/abs-2012-11742_esa,AnAlgorithmicTheoryofIntegerProgramming}. Simply put, the lemma describes the behavior for one stage in the multistage stochastic IP. Hence, current bounds iterate the lemma over the number of stages to obtain a bound for multistage stochastic IPs. In every iteration the bound grows by one exponent. This iterative application of the lemma seemed rather natural which is why many people in the community believed that a tower of exponents in the running time is actually necessary to solve multistage stochastic IPs.

\begin{figure}
\centering
    \includegraphics[width=0.8\linewidth]{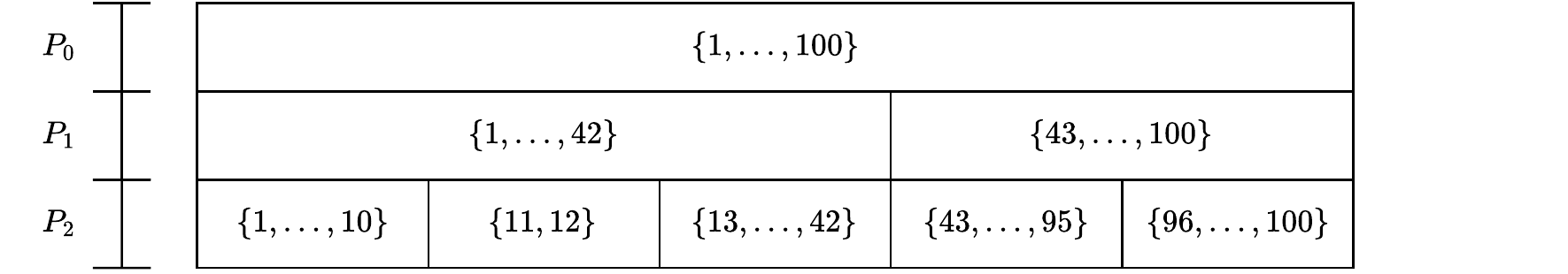}
    \caption{An example for interval partitions with consecutive refinement, where the intervals on every horizontal level are a partition of $\{1, \ldots , 100\}$.}
    \label{fig:partition}
\end{figure}

Our main conceptual result is \autoref{lem:small_valid_submultisets}, a generalization of the structural lemma by Klein. We generalize the lemma to cope directly with arbitrary many stages. We want to state an informal version of the lemma here. For this purpose we define partitions $P_0, \ldots , P_t$ of integral numbers in the interval $[1,n]$  and any partition $P_j$ is a refinement of partition $P_{j-1}$, \ie~for every $i\in P_j$ there exists $i'\in P_{j-1}$ such that $i\subset i'$. For each partition we assign a subset of entries of $\{1,\ldots , d \}$, such that partition $P_i$ is assigned entries $\{s_0+\ldots+s_{i-1}+1, \dots, s_0+\ldots+s_i\}$, with $d = s_0 + \ldots + s_t$. See \autoref{fig:partition} for an example. By $p(\tau,i)$ we denote the projection of vector $\tau \in \zz^d$ to the respective entries. In contrast to the above Lemma, we do not demand equality throughout the entire sum of vectors but only within each interval in the respective vector entries of the partition.

\begin{lemma}\label{lem:main_lemma_intuition}
Given multisets $T_1, \dots, T_n\subset \zz^d_{\geq 0}$ where all elements $\tau\in T_i$ have bounded size $\norm{\tau}_{\infty}\leq \Delta$ and partitions $P_0, \ldots , P_t$ as described above. Assuming that the total sum of all elements in each set is equal in the assigned entries, \ie 
\begin{equation*}
    \sum_{\tau \in T_\ell} p(\tau,i) =\,\dots\,= \sum_{\tau \in T_r}p(\tau,i) \quad \text{ for every interval $I = \{ \ell, \ldots ,r \}$ in partition $P_i$}
\end{equation*}
then there exist submultisets $S_1 \subseteq T_1, \dots, S_n\subseteq T_n$, which are not all empty, of bounded size $\abs{S_i}\leq 2^{(d\Delta)^{\mathcal{O}(d^{3t})}}$ such that 
\begin{equation*}
    \sum_{s \in S_\ell} p(s,i) =\,\dots\,= \sum_{s \in S_r}p(s,i) \quad \text{ for every interval $I = \{ \ell, \ldots ,r \}$ in partition $P_i$.} 
\end{equation*}
\end{lemma}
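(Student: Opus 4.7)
The plan is to proceed by induction on the number of stages $t$. The base case $t=0$ has a single partition $P_0=\{[1,n]\}$ with all $d$ coordinates assigned to it, so the hypothesis reduces to $\sum_{\tau\in T_1}\tau=\cdots=\sum_{\tau\in T_n}\tau$ and the conclusion is exactly \autoref{lem:Kims_lemma}. All the real content is in the inductive step, which must be designed carefully: the naive strategy of first applying the inductive hypothesis within each interval of $P_1$ and then re-applying Klein's lemma across those intervals at the top level would contribute a fresh exponent per stage and thereby recover exactly the tower-of-exponentials that this paper aims to collapse.

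To avoid this, inside each interval $I\in P_1$ I would generate a family of candidate submultisets by (iterated) application of the inductive hypothesis; each candidate has size at most $2^{(d\Delta)^{\mathcal{O}(d^{3(t-1)})}}$ and already satisfies all equalities at levels $1,\dots,t$ restricted to $I$. The top-level projections $p\!\left(\sum_{s\in S^{(I)}}s,\,0\right)\in\zz^{s_0}_{\geq 0}$ of these candidates live in a set of only $(d\Delta)^{\mathcal{O}(d^{3(t-1)})}$ possible values, because each candidate has bounded support and each entry is at most $\Delta$ times the inductive size bound. This is the place where the multistage structure pays off: the lower-level equalities are enforced inside each $I$ once and for all, so the only remaining task is to match top-level projections across intervals of $P_1$.

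The second step is a single Klein-style pigeonhole/Steinitz argument on these top-level projections across all intervals of $P_1$, working in the reduced dimension $s_0\leq d$ and with entries bounded by the inductive size times $\Delta$. This selects, from each chosen interval, one candidate whose top-level projection agrees with the others, which by the single-interval structure of $P_0$ provides the missing equality at level $0$. Combined with the already-satisfied lower-level equalities, this yields submultisets $S_1,\dots,S_n$ satisfying all required conditions, and the size estimate follows because the extra level contributes only an additive $\mathcal{O}(d^3)$ to the exponent, giving the claimed bound $2^{(d\Delta)^{\mathcal{O}(d^{3t})}}$.

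The main obstacle will be this parallel balancing step: one has to ensure the top-level argument is a \emph{single} application of Klein-type reasoning rather than being iterated inside itself, which requires precise quantitative control both on the number of distinct candidate profiles per interval and on how Steinitz-type bounds propagate through the reduced-dimension computation. A secondary technical issue is to guarantee that the constructed submultisets are not all empty after the top-level selection; I expect this to be handled by discarding the all-zero candidate and noting that pigeonhole over sufficiently many candidates always returns a nontrivial choice.
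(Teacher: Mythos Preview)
Your inductive plan has a genuine gap at the quantitative step that is supposed to prevent the tower from reappearing. You assert that the top-level projections of the candidates obtained from the inductive hypothesis ``live in a set of only $(d\Delta)^{\mathcal{O}(d^{3(t-1)})}$ possible values''. But each candidate has cardinality up to $2^{(d\Delta)^{\mathcal{O}(d^{3(t-1)})}}$, so its top-level projection is a vector in $\zz^{s_0}_{\ge 0}$ with entries up to $\Delta\cdot 2^{(d\Delta)^{\mathcal{O}(d^{3(t-1)})}}$; the number of such vectors is $2^{(d\Delta)^{\mathcal{O}(d^{3(t-1)})}}$, one exponential larger than you claim. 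More importantly, any Klein/Steinitz argument applied to these projections treats them as elements with entry bound $\Delta'\approx 2^{(d\Delta)^{\mathcal{O}(d^{3(t-1)})}}$, and Klein's bound has $\Delta'$ in the \emph{exponent}. One more level therefore does not add $\mathcal{O}(d^3)$ to the exponent; it exponentiates the whole bound, and you are back to a tower of height $t$. This is precisely the obstruction the paper identifies with iterating \autoref{lem:Kims_lemma}.

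The paper's proof is not inductive on $t$. Instead it introduces a cone-intersection mechanism (\autoref{lem:single_element} and \autoref{lem:almost_partitioning}): from the child multisets one extracts a new multiset of vectors lying in the intersection of all the child cones, with the crucial feature that each extracted vector has norm at most $(d\Delta)^{d^2}$ \emph{independently of how large the running sum $b$ is}. Iterating this construction bottom-up along the multistage tree, the entry bound at height $j$ is $\Delta_j=(d\Delta)^{d^{3j}}$, so each level multiplies only the \emph{exponent of the exponent} by $\text{poly}(d)$ rather than exponentiating the whole quantity. After reaching the root, a single high-multiplicity element is pushed back down to the leaves via a basis-representation argument (the Claim in the proof), which preserves the projected sums level by level. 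The missing idea in your proposal is exactly this bounded-norm representation via intersection cones; without it, any scheme that first produces per-interval candidates and then matches them at the top level will pay for the size of those candidates in the next application of Klein.
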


Using this Lemma, we can show an improved bound for the size of Graver elements of a multistage stochastic matrix $A$. A Graver element is an element of $\ker^{\zz}(A)$ that can not be written as the sum of two non-zero and sign-compatible vectors in the kernel.
\begin{corollary}\label{theorem:graver_multistage}
Let $y$ be a Graver element of multistage stochastic matrix $A$. Then it is bounded by
\[\norm{y}_{\infty} \leq 2^{(d\Delta)^{\mathcal{O}(d^{3t+1})}}.\]
\end{corollary}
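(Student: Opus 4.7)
The plan is to argue by contradiction: assume there exists a Graver element $y \in \ker^\zz(A)$ with $\norm{y}_\infty > 2^{(d\Delta)^{\mathcal{O}(d^{3t+1})}}$ and construct a proper conformal decomposition $y = y' + (y-y')$ using Lemma \ref{lem:main_lemma_intuition}. The construction mirrors the way \cite{DBLP:conf/ipco/Klein20} used Lemma \ref{lem:Kims_lemma} for two-stage stochastic IPs, but without iterating over stages: the tree-partition structure of the new main lemma matches the multistage block structure in a single application.

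First, label the $n$ leaves so that for every $j \in \{0, \ldots, t\}$ the leaves sharing a common depth-$j$ ancestor form contiguous intervals; these intervals give refining partitions $P_0, \ldots, P_t$ of $\{1, \ldots, n\}$ as needed by the lemma. For each leaf $i$ I would form a multiset $T_i \subset \zz^d_{\geq 0}$ whose elements encode, in stage-indexed coordinate blocks of sizes $s_0, \ldots, s_t$ (summing to $d$), one unit of signed contribution for every variable $y_k$ lying on the root-to-$i$ path (repeated $\abs{y_k}$ times), with auxiliary coordinates tracking the positive and negative row-contributions of each column so that the kernel condition $(Ay)|_{\text{leaf } i} = 0$ becomes a sum equality on $T_i$. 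By the tree structure, two leaves with a common depth-$j$ ancestor have identical contributions in the stage-$0,\ldots,j$ coordinate blocks, so the hypothesis of Lemma \ref{lem:main_lemma_intuition} holds, and element infinity norms are bounded by $\Delta = \norm{A}_\infty$.

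Applying Lemma \ref{lem:main_lemma_intuition} then yields submultisets $S_i \subseteq T_i$, not all empty, with $\abs{S_i} \leq 2^{(d\Delta)^{\mathcal{O}(d^{3t})}}$ and the same partition-indexed sum equalities. Decoding $S_i$ back into a vector $y'$ (whose entry at a variable $k$ of a depth-$j$ node $v$ equals the net signed count of $k$-type elements in any $S_i$ descending from $v$), the partition sum equalities make $y'$ well-defined across the tree while the auxiliary row-contribution coordinates force $Ay' = 0$. By construction $y'$ is sign-compatible with $y$ and non-zero. Each coordinate of $y'$ is at most $\abs{S_i} \leq 2^{(d\Delta)^{\mathcal{O}(d^{3t})}}$; the gap between the exponent $d^{3t}$ of the lemma and $d^{3t+1}$ of the corollary is absorbed by the factor-$d$ dimension blow-up coming from the auxiliary coordinates. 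Hence $\norm{y'}_\infty < \norm{y}_\infty$, so $y - y'$ is also non-zero and sign-compatible with $y$, contradicting $y$'s Graver property.

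The main obstacle is the design of the auxiliary coordinates, which must simultaneously (i) faithfully encode the kernel condition at each leaf as a sum equality preserved by the lemma, (ii) keep the element infinity norm bounded by $\Delta$, and (iii) fit within a dimension whose blow-up accounts for exactly the extra factor $d$ visible between $d^{3t}$ and $d^{3t+1}$ in the exponent. The reconciliation of the per-leaf decoded vectors into a single consistent $y'$ at internal tree nodes also depends crucially on the refining-partition sum equalities provided by Lemma \ref{lem:main_lemma_intuition}.
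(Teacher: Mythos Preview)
Your approach has a real gap in the encoding step, and it differs from the paper's in a way that matters.

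The paper does \emph{not} attempt to encode the kernel condition via auxiliary coordinates. Instead, for each leaf $i$ it decomposes $\pi(i,y)$ into a multiset $G_i$ of Graver elements of the submatrix $A_i$. These vectors live naturally in $\zz^d$ (the column dimension along a root-to-leaf path) and are automatically in $\ker(A_i)$, so any submultiset sum is again a kernel element of $A_i$; no row information needs to be carried. The price is that the elements of $G_i$ are bounded only by $\Delta' := (2d\Delta+1)^d$ via \autoref{lem:graver_size_by_columns}, and plugging $\Delta'$ into the $2^{(d\Delta')^{\mathcal{O}(d^{3t})}}$ bound of \autoref{lem:small_valid_submultisets} is exactly what produces the exponent $d^{3t+1}$.

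Your proposed encoding, by contrast, tries to keep element entries bounded by $\Delta$ and push the cost into the dimension via ``auxiliary coordinates tracking the positive and negative row-contributions.'' But the kernel condition $(Ay)|_{\text{leaf }i}=0$ consists of $r$ linear equations, where $r$ is the number of rows of a leaf block and is \emph{not} bounded by $d,\Delta,t$. Any faithful coordinate-wise encoding of these equations would need $\Omega(r)$ auxiliary coordinates, making the lemma's bound depend on $r$, which is unacceptable. You do not explain how to compress this to $O(d)$ coordinates, and I do not see a way to do so without essentially passing to kernel elements of $A_i$---which is the paper's route. Moreover, even granting a constant or factor-$d$ dimension blow-up $d'=cd$, the lemma would yield an exponent $(d')^{3t}=(cd)^{3t}$, which is $\mathcal{O}(d^{3t})$ and does \emph{not} account for the extra factor $d$; the $d^{3t+1}$ in the corollary comes from the element-size blow-up $\Delta\to\Delta'$, not from a dimension blow-up.
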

Using the algorithm of Eisenbrand et al.~\cite{AnAlgorithmicTheoryofIntegerProgramming} in combination with the improved bound for the size of Graver elements yields an algorithm for solving multistage stochastic IPs with a running time of $2^{(d\Delta)^{\mathcal{O}(d^{3t+1})}} \cdot n^{2} \varphi$, where $\varphi$ is the encoding length of the instance. The IP is of the form~(\ref{IP:main}), where additionally upper and lower bounds on the variables are allowed.

Using the Lemma, we obtain furthermore a statement regarding the proximity of multistage stochastic IPs. An IP has proximity $\rho$ if for every optimal solution $x^*$ to the linear relaxation of the IP there exists an optimal integral solution $x$ such that $\norm{x-x^*}_{\infty}\leq \rho$.
Cslovjecsek \etal~\cite{DBLP:journals/corr/abs-2012-11742_esa} generalized the structural lemma of Klein such that the sum of multisets is allowed to differ slightly in the assumption. Using their generalization, they bounded the proximity of two-stage and multistage stochastic IPs. We show that a similar generalization of our main \autoref{lem:main_lemma_intuition} holds. By this we derive  improved proximity bound for multistage stochastic IPs of the form~(\ref{IP:main}). 
\begin{lemma}\label{lem:proximity}
The proximity of multistage stochastic IPs is bounded by 
$\ \leq 2^{(d\Delta)^{\mathcal{O}(d^{3t+1})}}.$
\end{lemma}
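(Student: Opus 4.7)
The plan is to mirror the proximity argument of Cslovjecsek~\etal~\cite{DBLP:journals/corr/abs-2012-11742_esa} but with our \autoref{lem:main_lemma_intuition} replacing the iterated form of the Klein structural lemma that they rely on. The first step is to establish an \emph{approximate} variant of \autoref{lem:main_lemma_intuition} in which the hypothesis of exact equality of partial sums on each interval of each partition $P_i$ is weakened to the requirement that consecutive partial sums differ by at most some tolerance $\delta$ in the entries singled out by $p(\cdot,i)$. The conclusion is weakened analogously: the non-empty submultisets $S_1,\dots,S_n$ need only satisfy the same approximate equality. I expect that the proof of \autoref{lem:main_lemma_intuition} can be adapted essentially verbatim, with the slack $\delta$ treated as an extra contribution to $\Delta$ at every level of the stage-by-stage recursion; since the recursion has depth $t$, the bound retains the form $2^{(d(\Delta+\delta))^{\mathcal O(d^{3t})}}$ so that $\delta = \mathcal O(d\Delta)$ is absorbed into the constants.

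Given this approximate structural lemma, the proximity bound is proved as follows. Let $x^*$ be a vertex optimum of the LP relaxation and let $z$ be an optimal integer solution chosen so as to minimize $\norm{z-x^*}_\infty$. Set $y=z-x^*\in\ker(\mathcal A)$ and suppose for contradiction that $\norm{y}_\infty>2^{(d\Delta)^{\mathcal O(d^{3t+1})}}$. Following the standard sign-compatible exchange argument, it suffices to produce a non-zero $v\in\ker^{\zz}(\mathcal A)$ that is sign-compatible with $y$ and has $\norm{v}_\infty\leq 2^{(d\Delta)^{\mathcal O(d^{3t+1})}}$; then $z-v$ is integer feasible (box constraints are preserved by sign-compatibility) and strictly closer to $x^*$, contradicting the minimality of $\norm{z-x^*}_\infty$.

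To build $v$, I would organise the coordinates of $y$ into multisets $T_1,\dots,T_n\subset\zz^d$, one per leaf block, after first separating $y^+$ from $y^-$ and replacing $x^*$ by a rounded version so that all quantities are integral up to a bounded slack. The multistage block structure of $\mathcal A$ translates directly into the partition-compatible sum conditions of \autoref{lem:main_lemma_intuition}: for the interval corresponding to any subtree, the partial sums of the projections $p(\tau,i)$ have to agree up to $\delta=\mathcal O(d\Delta)$, because the only source of disagreement is the rounding of $x^*$. This is precisely the hypothesis of the approximate structural lemma above. Its application yields non-empty $S_i\subseteq T_i$ of total size at most $2^{(d\Delta)^{\mathcal O(d^{3t+1})}}$; re-assembling the corresponding integer coordinates of $y$ gives the desired vector $v$, with the approximate partial-sum equality upgraded to an exact kernel condition by a small integer correction supported on the $S_i$, analogous to the clean-up step of~\cite{DBLP:journals/corr/abs-2012-11742_esa}.

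The main obstacle is establishing the approximate form of \autoref{lem:main_lemma_intuition} with the stated bound: one has to replay the stage-by-stage construction while keeping the slack from accumulating destructively across the $t$ levels of recursion, which is a purely combinatorial but delicate adjustment of the original argument. A secondary difficulty is bookkeeping around the fractional part of $x^*$ so that the vector $v$ extracted from $y=z-x^*$ really corresponds to an integer element of $\ker(\mathcal A)$ rather than to an approximate one; this is handled by the sign-compatibility of $v$ with $y$ together with a bounded correction term that accounts for the at most $d\Delta$ rows incident to each connecting block.
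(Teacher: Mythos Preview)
Your overall strategy---prove an approximate version of the structural lemma, then run a sign-compatible exchange argument---is exactly the paper's strategy, and the paper indeed proves the approximate version (its \autoref{lem:small_valid_submultisets} takes $\rho$-valid input). But two of your steps are genuine gaps, not just missing detail.

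First, the construction of the multisets $T_i$ is wrong as stated. ``Organising the coordinates of $y$'' does not produce multisets of vectors in $\zz^d$ with $\ell_\infty$-norm bounded by a function of $d,\Delta$, which is what the structural lemma requires; coordinates of $y$ are scalars, and splitting into $y^+,y^-$ or rounding $x^*$ does not help. The paper's move here is essential and you are missing it: for each leaf $i$, use the local proximity bound for the $d$-column subprogram $P_i$ (\autoref{lem:proximity_by_columns}) to find an integer solution $\hat{x}^{int}_i$ within $\rho=(d\Delta)^{d+1}$ of $\pi(i,x^{frac})$ and conformal to $\pi(i,\tilde{x}^{int})-\pi(i,x^{frac})$; then decompose the integral kernel element $\hat{x}^{int}_i-\pi(i,\tilde{x}^{int})\in\ker^{\zz}(A_i)$ into a sign-compatible multiset $G_i$ of Graver elements of $A_i$. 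This is what simultaneously gives bounded element size (via \autoref{lem:graver_size_by_columns}, $g_\infty(A_i)\le(2d\Delta+1)^d$) and guarantees that \emph{any} sub-multiset of $G_i$ sums to an element of $\ker^{\zz}(A_i)$. Your rounding idea provides neither property.

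Second, you have the output of the approximate structural lemma backwards. The lemma takes $\rho$-valid input but returns submultisets that are \emph{exactly} valid for $\mathcal T(A)$ with respect to some integral $\hat b$. Combined with the previous point (each $S_i\subset G_i\subset\mathcal G(A_i)$), \autoref{ob:valid_iff_kernel} gives $u=\hat b\in\ker^{\zz}(A)$ directly, with $u\sqsubseteq x^{frac}-\tilde{x}^{int}$ by conformality of the Graver decomposition. There is no ``clean-up step'' in the paper and none is needed; your proposed integer correction would have to preserve both kernel membership and sign-compatibility simultaneously, and you give no mechanism for that.
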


Our proximity bound combined with the algorithmic framework of Cslovjecsek \etal~\cite{DBLP:journals/corr/abs-2012-11742_esa} yields our main theorem. 
\begin{theorem}\label{thm:2-Stage_running_time}
    A multistage stochastic IP of the form (\ref{IP:main}) can be solved in time
    \begin{align*}
        2^{(d\Delta)^{\mathcal{O}(d^{3t+1})}} \cdot (rn)^{1+o(1)}.
    \end{align*}
\end{theorem}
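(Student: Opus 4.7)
The plan is to obtain the running time by plugging the improved proximity bound (\autoref{lem:proximity}) into the algorithmic framework of Cslovjecsek \etal~\cite{DBLP:journals/corr/abs-2012-11742_esa}. Their framework is structured as a parameterized reduction: given a proximity bound $\rho$ for multistage stochastic IPs, they solve such an IP in time $\rho^{h(d,t)} \cdot (rn)^{1+o(1)}$ for some function $h$, via a near-linear-time divide-and-conquer along the tree of stages. The $\rho$-dependent factor is the only part whose shape must be tracked, since the near-linear factor $(rn)^{1+o(1)}$ is independent of proximity.

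First I would recall the structure of their algorithm. It processes the stages recursively, and at each internal node of the tree it aggregates the subproblems corresponding to its children. For this aggregation to be efficient, one enumerates candidate interface assignments on the connecting columns, restricted to a region whose radius is controlled by the proximity between a fractional optimum and some integral optimum. Hence the enumeration cost, and with it the dominant cost of the algorithm, scales polynomially in $\rho$ with degree depending only on $d$ and $t$.

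Second I would substitute $\rho \leq 2^{(d\Delta)^{\mathcal{O}(d^{3t+1})}}$ from \autoref{lem:proximity} into this bound. Raising a quantity of the form $2^{(d\Delta)^{\mathcal{O}(d^{3t+1})}}$ to a polynomial of degree $\mathrm{poly}(d,t)$ yields again a quantity of the same form, since the extra polynomial factor in the exponent is absorbed into the $\mathcal{O}(\cdot)$. Thus the dominant super-polynomial term becomes exactly $2^{(d\Delta)^{\mathcal{O}(d^{3t+1})}}$, while the near-linear factor $(rn)^{1+o(1)}$ carries over unchanged, yielding the claimed total running time.

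The main obstacle is a careful bookkeeping step: verifying that every occurrence of proximity in the running-time analysis of Cslovjecsek \etal is polynomial in $\rho$ with exponent bounded by some function of $d$ and $t$, and that the near-linear factor in $rn$ is not spoiled when the new proximity bound is plugged in. Since their framework already isolates proximity as a black-box parameter, so that any proximity improvement propagates automatically, this verification should be essentially mechanical, and the theorem follows directly.
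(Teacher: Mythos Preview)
Your proposal is correct and follows exactly the paper's approach: the paper's entire justification for \autoref{thm:2-Stage_running_time} is the single sentence ``Our proximity bound combined with the algorithmic framework of Cslovjecsek \etal~\cite{DBLP:journals/corr/abs-2012-11742_esa} yields our main theorem,'' and you have simply spelled out what this entails. Your added discussion of how proximity enters the Cslovjecsek \etal{} running time as a black-box parameter, and the bookkeeping caveat about exponents, is reasonable elaboration that the paper omits.
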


\section{Preliminaries}\label{sec:preliminaries}
For a linear program $P=(x,A,b,c)$ let $Sol^{\rr}(P)$ and $Sol^{\zz}(P)$ denote the sets of fractional and integral solutions, respectively. Denote by $col(A)$ and $row(A)$ the set of column and row indices of a (sub-)matrix $A$, respectively.

\paragraph{Multistage stochastic matrices.}  
We define the shape of the constraint matrix $A$ of a multistage stochastic IP, which we will call a multistage stochastic matrix. The constraint matrix consists of blocks $A^{(1)},\ldots,A^{(\ell)}$ for some $\ell\in\zz_{\geq 0}$, where each block uses a unique set of columns of $A$. The matrix $A$ is multistage stochastic if
\begin{itemize}
    \item there is a block $A^{i_0}$ such that for every $1\leq i\leq \ell$ we have $row(A^{(i)})\subseteq row(A^{(i_0)})$ and
    \item for every two blocks $A^{(i)}, A^{(j)}$ one of the following three conditions $row(A^{(i)})\subseteq row(A^{(j)})$, $row(A^{(i)})\supseteq row(A^{(j)})$, or $row(A^{(i)})\cap row(A^{(j)}) = \emptyset$ is fulfilled.
\end{itemize}

\paragraph{Multistage tree.}
We define a multistage tree $\mathcal{T}(A)=(V,E)$ for any multistage matrix $A$. For every block $A^{(i)}$ there is a vertex $v=col(A^{(i)})\in V$. There is an edge $(u,v)\in E$ if $u$ is the set of columns of $A^{(i)}$, $v$ is the set of columns of $A^{(j)}$, and $row(A^{(i)})\supseteq row(A^{(j)})$. 

This definition is closely related to concepts of primal treedepth of a matrix. If we consider a primal td-decomposition of the primal graph of $A$, then the multistage tree combines vertices on a path in the td-decomposition, where each vertex on the path has exactly one descendent. For more details on this topic we refer to \cite{AnAlgorithmicTheoryofIntegerProgramming}.

\paragraph{Notations.}
The height of the tree is denoted by $t$. We assume that vertices of the same height have same cardinality, \ie~for every $v\in V$ of height $i$ we have $\abs{v}=s_i$. We denote the partial sums of the number of columns by
$d_i := s_0 + \,\dots\, + s_{t-i}$. Note that then $d:=d_0$ matches the primal treedepth $td_P(A)$. 

Let $n$ denote the number of leaves. We assign a number $n(v)\in\{1, \dots, n\}$ to every leaf $v\in V$, where $n(.)$ is a bijective function. 
For every leaf $v$ with corresponding block $A^{(j)}$ and $n(v)=i$ we define a submatrix $A_i$ of $A$. The submatrix $A_i$ consists of the entries $A_{k\ell}$, where $k\in row(A^{(j)})$ and $\ell \in \bigcup_{0\leq j\leq t}v_j$, where $(v_0, \ldots, v_t)$ is a path from the root to leaf $v=v_t$.
Let $P_i=(\tilde{x},A_i,b_i,c_i)$ denote the subprogram of the multistage stochastic IP $P=(x, A, b, c)$, where $b_i$ is the projection of $b$ to the row indices $row(A_i)$ and $c_i$ the projection of $c$ to the column indices $col(A_i)$.

\begin{figure}[b]
\centering
    \includegraphics[width=0.6\linewidth]{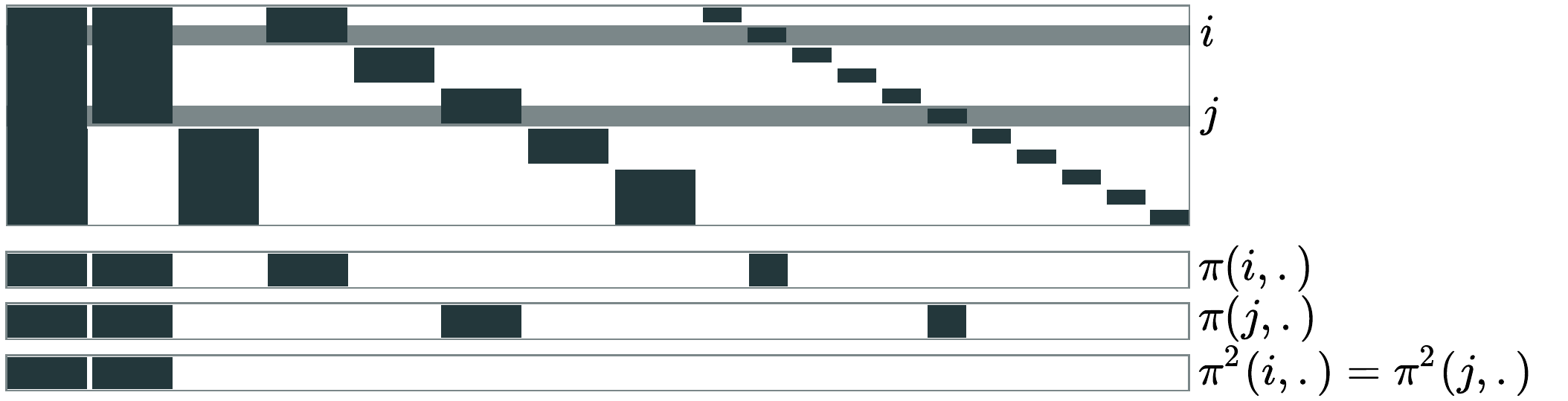}
    \caption{Illustration of the defined projections. The projections depend on the multistage matrix (on top). For $i\leq n$ the function $\pi(i,.)$ projects a vector on the indices of block matrices sharing rows with the leaf $i$, which is marked by filled rectangles (middle part). The variation $\pi^j(i,.)$ again projects the vector removing the right-most $j$ blocks of indices (bottom part).}
    \label{fig:projections}
\end{figure}

Throughout the paper we consider few variants of projecting vectors related to multistage structures. 
Define the function $\pi (i,b)$ for every $i\leq n$ and vector $b\in \rr^{col(A)}$ as the projection of $b$ to the indices $\bigcup_{0\leq j\leq t}v_j$. Note that these vectors are of dimension $d$ as $\abs{v_i} = s_i$.
Let $\pi ^{j}(i,b)$ be the projection of $\pi (i,b)$ to its first $d_j$ indices, which are $\bigcup_{0\leq k \leq t-j}v_k$.
For $b \in \rr^{d_j}$ let $\pi(b)$ be the projection to its first $d_{j+1}$ indices.

\paragraph{Graver bases.}
The conformal (partial) order $\sqsubseteq$ on two vectors $x,y\in \rr^n$ is defined by $x\sqsubseteq y$, if $x_jy_j\geq 0 \text{ and } \abs{x_j}\leq\abs{y_j}$ for all components $j\in \{1, \dots, n\}$. The Graver base $\mathcal{G}(A)$ of an integral matrix $A$ consists of the inclusionwise minimal and non-zero elements of $\ker^{\zz}(A)$.
We are interested in bounding the $\ell_p$ \emph{Graver complexity} of $A$, which is size of 
\begin{equation*}
    g_p(A) := \max_{v\in \mathcal{G}(A)}\norm{v}_p
\end{equation*}
for any $p\in [1,\infty]$.
As remarked by Cslovjecsek \etal~\cite{DBLP:journals/corr/abs-2012-11742_arxiv}, the classical bound for the Graver complexity~\cite{eisenbrand2018faster}, which depends on the number of rows of a matrix, also holds regarding the number of columns of a matrix.
\begin{lemma}[\cite{DBLP:journals/corr/abs-2012-11742_arxiv, eisenbrand2018faster}]\label{lem:graver_size_by_columns}
For every integer matrix $A$ with $m$ columns we obtain the bound
\begin{equation*}
    g_{\infty}(A) \leq (2m\norm{A}_{\infty}+1)^m.
\end{equation*}
\end{lemma}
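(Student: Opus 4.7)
The plan is to reduce the claim to the classical Graver complexity bound of Eisenbrand et al.~\cite{eisenbrand2018faster}, which states that $g_\infty(B) \le (2n\norm{B}_\infty + 1)^n$ for every integer matrix $B$ with $n$ rows. The key observation enabling the reduction is that $\mathcal{G}(A)$ is determined entirely by $\ker^{\zz}(A)$ together with the conformal order $\sqsubseteq$ on $\zz^m$; neither depends on a specific representation of the row space of $A$. Hence any two matrices with the same integer kernel have the same Graver basis, and in particular the same $\ell_\infty$ Graver complexity.

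First I would select a maximal $\qq$-linearly independent subset of the rows of $A$, obtaining a row-submatrix $A'$ of $A$ with exactly $r := \operatorname{rank}(A) \le m$ rows. Each deleted row is a rational linear combination of the retained rows, so the equation it encodes is implied by the equations of $A'$, giving $\ker^{\rr}(A) = \ker^{\rr}(A')$. Intersecting with $\zz^m$ yields $\ker^{\zz}(A) = \ker^{\zz}(A')$, and therefore $\mathcal{G}(A) = \mathcal{G}(A')$ by the observation above. Because the rows of $A'$ form a subset of the rows of $A$, we also have $\norm{A'}_\infty \le \norm{A}_\infty$.

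Finally I would apply the classical row-based bound to $A'$, which has $r \le m$ rows, obtaining
\[
g_\infty(A) \;=\; g_\infty(A') \;\le\; (2r\norm{A'}_\infty + 1)^{r} \;\le\; (2m\norm{A}_\infty + 1)^{m},
\]
as desired. There is no genuine obstacle in the argument; the only step that warrants explicit verification is the first one, namely that deleting $\qq$-redundant rows preserves the integer kernel. This is safe precisely because we only \emph{delete} rows (we never replace a row by a rational combination of others), so the lattice $\ker^{\zz}(A) = \ker^{\rr}(A) \cap \zz^m$ is left untouched.
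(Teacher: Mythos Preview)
Your argument is correct. Note, however, that the paper does not give its own proof of this lemma: it is stated as a cited result from \cite{DBLP:journals/corr/abs-2012-11742_arxiv, eisenbrand2018faster}, preceded by the remark that ``the classical bound for the Graver complexity, which depends on the number of rows of a matrix, also holds regarding the number of columns of a matrix.'' Your reduction---pass to a row-submatrix $A'$ with $\operatorname{rank}(A)\le m$ rows, observe $\ker^{\zz}(A)=\ker^{\zz}(A')$ and hence $\mathcal{G}(A)=\mathcal{G}(A')$, then apply the row-based bound of \cite{eisenbrand2018faster}---is precisely the observation behind that remark, so your approach matches what the paper (implicitly, via citation) has in mind.
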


\paragraph{Proximity.}

Our proof for the proximity of multistage stochastic IPs follows the proof structure of Cslovjecsek \etal~\cite{DBLP:journals/corr/abs-2012-11742_arxiv}. Hence, we use their alternative definition and state two lemmas from their work. Proximity in this sense is a geometric measure that depends only on the polytope $Sol^{\rr}(P)$ and not on the objective function.
\begin{Definition}[\cite{DBLP:journals/corr/abs-2012-11742_arxiv}]
Let $P=(x,A,b,c)$ be a linear program. The \emph{proximity} of $P$, denoted $proximity_{\infty}(P)$,  is the infimum of reals $\rho\geq 0$ satisfying the following: for every fractional solution $x^{frac}\in Sol^{\rr}(P)$ and integral solution $x^{int}\in Sol^{\zz}(P)$ there exists $\tilde{x}^{int}\in Sol^{\zz}(P)$ such that
\begin{equation*}
    \norm{\tilde{x}^{int} - x^{frac}}_{\infty}\leq \rho \quad\text{and}\quad \tilde{x}^{int} - x^{frac} \sqsubseteq x^{int} - x^{frac}.
\end{equation*}
\end{Definition}
Cslovjecsek \etal{} proved that their notion of proximity implies a bound on the usual definition of proximity. Hence, it suffices to prove bounds for $proximity(A)$. Subsequently, they gave a bound for the proximity of arbitrary matrices.
\begin{lemma}[\cite{DBLP:journals/corr/abs-2012-11742_arxiv}]\label{lem:proximity_definitions_are_equiv}
Suppose $P=(x,A,b,c)$ is a linear program. Then for every optimal fractional solution $x^{frac}$ to $P$ there exists an optimal integral solution $x^{int}$ to $P$ satisfying
\begin{equation*}
    \norm{x^{int}- x^{frac}}_{\infty} \leq proximity_{\infty}(P).
\end{equation*}
\end{lemma}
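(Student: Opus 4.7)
The plan is to use the definition of $proximity_\infty(P)$ essentially as a black box, combined with a short conformality argument that transfers LP optimality of $x^{frac}$ to IP optimality of the witness. I will assume $P$ has at least one integral solution (otherwise the statement is vacuous, since there is no optimal integral solution to produce), and, likewise, assume $P$ is bounded so that both an optimal fractional $x^{frac}$ and some optimal integral solution $x^{int}_{0}$ exist.

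First I would feed the pair $(x^{frac}, x^{int}_{0})$ into the definition of $proximity_\infty(P)$. For each pair the set of witnesses $\tilde{x}^{int}\in Sol^{\zz}(P)$ with $\tilde{x}^{int}-x^{frac}\sqsubseteq x^{int}_{0}-x^{frac}$ is finite (these integer points live in the compact box $[\,x^{frac}\wedge x^{int}_{0},\; x^{frac}\vee x^{int}_{0}\,]$), so the infimum over this pair is attained, and it is no larger than $proximity_\infty(P)$. Let $\tilde{x}^{int}$ be a minimiser; then
\[
\norm{\tilde{x}^{int}-x^{frac}}_\infty \le proximity_\infty(P),\qquad \tilde{x}^{int}-x^{frac}\sqsubseteq x^{int}_{0}-x^{frac}.
\]
It remains to show $c^{T}\tilde{x}^{int}\le c^{T}x^{int}_{0}$, i.e.\ that $\tilde{x}^{int}$ is also an optimal integral solution.

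For this I set $w := x^{int}_{0}-\tilde{x}^{int} = (x^{int}_{0}-x^{frac})-(\tilde{x}^{int}-x^{frac})$. The conformality $\tilde{x}^{int}-x^{frac}\sqsubseteq x^{int}_{0}-x^{frac}$ immediately yields that $w$ is conformal with $x^{int}_{0}-x^{frac}$ as well: coordinate-wise, $w_j$ has the same sign as $(x^{int}_{0}-x^{frac})_j$ and $|w_j|\le|(x^{int}_{0}-x^{frac})_j|$. I then check that the fractional point $x^{frac}+w$ is feasible for the LP relaxation. Clearly $A(x^{frac}+w)=Ax^{frac}+Ax^{int}_{0}-A\tilde{x}^{int}=b$, and for non-negativity: if $(x^{int}_{0}-x^{frac})_j\ge 0$ then $w_j\ge 0$ so $x^{frac}_j+w_j\ge x^{frac}_j\ge 0$, while if $(x^{int}_{0}-x^{frac})_j<0$ then $w_j\in[(x^{int}_{0}-x^{frac})_j,0]$, so $x^{frac}_j+w_j\in[x^{int}_{0\,j},x^{frac}_j]\subseteq\rr_{\ge 0}$.

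Since $x^{frac}+w\in Sol^{\rr}(P)$ and $x^{frac}$ is an optimal fractional solution to the minimisation IP, we get $c^{T}(x^{frac}+w)\ge c^{T}x^{frac}$, hence $c^{T}w\ge 0$, i.e.\ $c^{T}\tilde{x}^{int}\le c^{T}x^{int}_{0}$. Because $x^{int}_{0}$ is already optimal among integer solutions, $\tilde{x}^{int}$ must be optimal too, and combined with the norm bound above this proves the lemma. The only mildly delicate point is the passage from the "for all $\rho>proximity_\infty(P)$" style definition to attainment at $\rho=proximity_\infty(P)$; I handle this by noting that, for the specific pair $(x^{frac},x^{int}_{0})$, the candidate set is finite so the infimum is realised. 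Everything else is a direct conformality/feasibility verification and is not expected to cause difficulty.
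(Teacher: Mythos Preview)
The paper does not supply its own proof of this lemma; it is quoted verbatim from Cslovjecsek \etal~\cite{DBLP:journals/corr/abs-2012-11742_arxiv} and used as a black box. Your argument is correct and is precisely the standard proof one finds in that reference: start from any IP-optimal $x^{int}_0$, invoke the geometric proximity definition to extract a conformal witness $\tilde{x}^{int}$ close to $x^{frac}$, and then use the conformality $\tilde{x}^{int}-x^{frac}\sqsubseteq x^{int}_0-x^{frac}$ to show that $x^{frac}+(x^{int}_0-\tilde{x}^{int})$ is LP-feasible, whence LP-optimality of $x^{frac}$ forces $c^T\tilde{x}^{int}\le c^T x^{int}_0$. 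Your handling of the attainment issue (the candidate set for a fixed pair is a finite set of integer points in a box, so the infimum is realised and is at most $proximity_\infty(P)$) is also exactly right.
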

\begin{lemma}[\cite{DBLP:journals/corr/abs-2012-11742_arxiv}]\label{lem:proximity_by_columns}
Let $P=(x,A,b,c)$ be a linear program where $A$ has $m$ columns. Then
\begin{equation*}
    proximity_{\infty}(P)\leq (m\norm{A}_{\infty})^{m+1}.
\end{equation*}
\end{lemma}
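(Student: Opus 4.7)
The plan is to apply a conformal Carathéodory argument with the Graver basis of $A$, and then to invoke \autoref{lem:graver_size_by_columns}. Fix any $x^{frac}\in Sol^{\rr}(P)$ and $x^{int}\in Sol^{\zz}(P)$, set $y := x^{int}-x^{frac}$, and note that $Ay=0$, so $y$ lies in the real kernel of $A$.

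The key structural step is to decompose $y$ as a non-negative real combination of Graver elements of $A$ that are sign-compatible with $y$: there exist $g_1,\ldots,g_k\in\mathcal{G}(A)$ with $g_i\sqsubseteq y$ and $\lambda_i>0$ such that $y = \sum_{i=1}^k \lambda_i g_i$. This is the real analogue of the usual integer Graver decomposition; it follows because the set $\{x\in\rr^{m}:Ax=0,\; x\sqsubseteq y\}$ is a polyhedral cone inside the closed orthant of $y$, and its extreme rays are exactly the positive scalings of Graver elements conformal to $y$. Applying Carathéodory's theorem inside this closed orthant then forces $k\le m$.

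Now set $z:=\sum_{i=1}^k \lfloor\lambda_i\rfloor g_i$ and $\tilde x^{int}:=x^{int}-z$. Each $g_i$ is integral and $\lfloor\lambda_i\rfloor\ge 0$, so $z$ is integral with $Az=0$, and each summand $\lfloor\lambda_i\rfloor g_i$ is conformal to $y$, hence $z\sqsubseteq y$. Therefore $\tilde x^{int}$ is integral, $A\tilde x^{int}=b$, and $\tilde x^{int}-x^{frac}=y-z\sqsubseteq y=x^{int}-x^{frac}$. The conformality $(\tilde x^{int}-x^{frac})\sqsubseteq (x^{int}-x^{frac})$ forces every coordinate of $\tilde x^{int}$ to lie between the corresponding coordinates of $x^{frac}$ and $x^{int}$, so non-negativity (and any box constraints) carries over from $x^{frac},x^{int}$ to $\tilde x^{int}$, giving $\tilde x^{int}\in Sol^{\zz}(P)$. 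Writing $y-z=\sum_{i=1}^k(\lambda_i-\lfloor\lambda_i\rfloor)g_i$, bounding each fractional coefficient by $1$, each $\|g_i\|_\infty$ by $g_\infty(A)$, and using $k\le m$, one gets
\[
  \|\tilde x^{int}-x^{frac}\|_\infty < m\cdot g_\infty(A).
\]
Invoking \autoref{lem:graver_size_by_columns} yields $g_\infty(A)\le (2m\|A\|_\infty+1)^m$, and the claimed bound $(m\|A\|_\infty)^{m+1}$ follows after absorbing the polynomial prefactor into the exponent.

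The only non-routine ingredient is the real-valued conformal Graver decomposition in the second step. For integer vectors $y\in\ker^{\zz}(A)$ the decomposition into conformal Graver elements is classical; extending it to arbitrary real $y\in\ker^{\rr}(A)$ needs either a closure/limit argument on the rational case or a direct polyhedral-cone argument identifying the extreme rays. Once that is established, everything else - flooring coefficients, preservation of conformality under addition, coordinate-wise feasibility of $\tilde x^{int}$, and the final numeric bound - is bookkeeping.
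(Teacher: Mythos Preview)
This lemma is quoted from \cite{DBLP:journals/corr/abs-2012-11742_arxiv} and the present paper gives no proof of its own, so there is nothing here to compare against directly. Your argument is the standard Graver--Carath\'eodory proximity argument and is structurally sound.

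There is, however, a numerical gap in the last line. Combining $\norm{\tilde x^{int}-x^{frac}}_\infty < m\cdot g_\infty(A)$ with \autoref{lem:graver_size_by_columns} gives $m(2m\norm{A}_\infty+1)^m$, and this is \emph{not} bounded by $(m\norm{A}_\infty)^{m+1}$: already for $m=2$, $\norm{A}_\infty=1$ one gets $2\cdot 5^2=50>8=2^3$. The phrase ``absorbing the polynomial prefactor into the exponent'' hides a genuine factor of order $2^m$ that does not fit. What your argument actually delivers is a bound of the shape $(m\norm{A}_\infty)^{\mathcal O(m)}$, which is perfectly adequate for every use of the lemma in this paper (it is only invoked with $m=d$ inside an expression where constants in the exponent disappear into the $\mathcal O$), but it does not recover the stated exponent $m{+}1$. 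To get the precise constant one must bound the generators actually used more carefully: the extreme rays of $\ker(A)$ intersected with an orthant are the \emph{circuits} of $A$, and a direct Cramer-type estimate on circuits, rather than the coarser Graver bound of \autoref{lem:graver_size_by_columns}, yields the sharper exponent.

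Two minor imprecisions. The set $\{x:Ax=0,\ x\sqsubseteq y\}$ is a bounded box, not a cone; you mean $\ker(A)$ intersected with the closed orthant determined by the sign pattern of $y$. And you neither need nor in general have $g_i\sqsubseteq y$ (this fails whenever $\lambda_i<1$); what you actually use, correctly, is $z=\sum_i\lfloor\lambda_i\rfloor g_i\sqsubseteq y$, which follows from sign-compatibility together with $\lfloor\lambda_i\rfloor\le\lambda_i$.
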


\section{On the Structure of Solutions}\label{sec:structure of solutions}
In fact, we prove a more general lemma than \autoref{lem:main_lemma_intuition}, where the sums may differ slightly. We will state this more general version in the context of multistage stochastic IPs. 
Therefore, we require a notion of multisets, where the sums of elements differ slightly in the context of multistage stochastic IPs.

\begin{Definition}
Consider a multistage stochastic matrix $A$ with multistage tree $\mathcal{T}(A)$. Multisets $T_1, \dots, T_n \subset \zz^{d}$ are called  \emph{$\rho$-valid} for $\mathcal{T}(A)$ if there exists $b \in \rr^{col(A)}$ such that for every $i \leq n$ we have
\[\norm{\sum_{\tau\in T_{i}}\tau - \pi(i, b)}_{\infty} < \rho.\]
\end{Definition}
Multisets that are $1$-valid are called \emph{valid}. Valid multisets regarding an integral vector are a tree-like version of equivalence. A special case are two-stage stochastic IPs, where the multistage tree has height $1$. In this case valid multisets are such that the sums of vectors projected to the first $s_0$ indices are equal. Except for the projection, this is the same condition as in \autoref{lem:Kims_lemma}.

Valid multisets are a natural definition for multistage stochastic matrices as they yield a characterization of its integral kernel elements in the following sense. Consider any $b \in\zz^{col(A)}$. If there exist multisets $G_1\subset \mathcal{G}(A_1), \dots, G_n\subset \mathcal{G}(A_n)$ that are valid for $\mathcal{T}(A)$ regarding vector $b$, then we have that 
\begin{equation}\label{eq:valid_multisets}
    \sum_{g \in G_i}g = \pi(i,b)    
\end{equation}
is in the kernel of submatrix $A_i$ for every $i\leq n$ as it is the sum of Graver elements. Hence, $b$ is in the kernel of $A$. 
If otherwise $b\in\ker^{\zz}(A)$, then $\pi(i,b)$ is in the kernel of submatrix $A_i$ for every $i\leq n$ and there exist multisets $G_1\subset \mathcal{G}(A_1), \dots, G_n\subset \mathcal{G}(A_n)$ such that $(\ref{eq:valid_multisets})$ holds for every $i\leq n$. Thus the multisets are valid for $\mathcal{T}(A)$ regarding vector $b$. 
\begin{observation}\label{ob:valid_iff_kernel}
For a vector $b\in\zz^{col(A)}$ there exist multisets $G_1\subset \mathcal{G}(A_1), \dots, G_n\subset \mathcal{G}(A_n)$ that are valid for $\mathcal{T}(A)$ regarding vector $b$ if and only if $b\in\ker^{\zz}(A)$.
\end{observation}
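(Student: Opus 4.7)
The plan is to prove the two directions of the biconditional separately, leveraging the discussion preceding the observation.

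For the forward direction, I would start from valid multisets $G_1 \subset \mathcal{G}(A_1), \dots, G_n \subset \mathcal{G}(A_n)$ with respect to $b$. Since ``valid'' means $1$-valid, $\|\sum_{g \in G_i} g - \pi(i,b)\|_\infty < 1$, and because $b$ is integral and Graver elements are integral, this strict inequality forces the exact equality $\sum_{g \in G_i} g = \pi(i, b)$ for every leaf $i$. Each $g \in G_i$ lies in $\ker^{\zz}(A_i)$ by definition of the Graver basis, so $A_i \pi(i, b) = 0$ for every leaf $i$. The remaining step is to upgrade this to $A b = 0$.

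For the reverse direction, I would start from $b \in \ker^{\zz}(A)$ and show $A_i \pi(i,b) = 0$ for every leaf $i$. Then $\pi(i,b) \in \ker^{\zz}(A_i)$, and by the defining property of the Graver basis, $\pi(i,b)$ decomposes as a sign-compatible sum of elements of $\mathcal{G}(A_i)$; collecting these as the multiset $G_i$ yields $\sum_{g \in G_i} g = \pi(i,b)$, hence $\|\sum_{g \in G_i} g - \pi(i,b)\|_\infty = 0 < 1$, so the multisets $G_1, \dots, G_n$ are valid for $\mathcal{T}(A)$ regarding $b$.

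The main ingredient tying both directions together is a combinatorial claim about the multistage structure: for every row $k$ of $A$, the set of columns $\ell$ with $A_{k\ell} \neq 0$ is contained in the union of blocks along a unique root-to-leaf path in $\mathcal{T}(A)$. This follows from the defining subset/disjointness condition on row sets of blocks (each row of $A$ lies in a unique leaf block $A^{(j)}$, and any block containing row $k$ must have $row(A^{(j)}) \subseteq row(\cdot)$, placing it on the root-to-leaf path ending at $A^{(j)}$). Consequently, the $k$-th row of $Ab$ equals the corresponding row of $A_{n(v)} \pi(n(v), b)$ where $v$ is that leaf, so $Ab = 0$ if and only if $A_i \pi(i, b) = 0$ for every $i \in \{1, \dots, n\}$. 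Applying this equivalence in both directions completes the proof.

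The only mild obstacle is stating the row-partition / path-support claim precisely; the rest is bookkeeping with the definitions of validity, $\pi(i, \cdot)$, and the Graver basis.
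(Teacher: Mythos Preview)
Your proposal is correct and follows essentially the same approach as the paper's own justification (the paragraph immediately preceding the observation): both argue that validity with an integral $b$ forces $\sum_{g\in G_i}g=\pi(i,b)\in\ker^{\zz}(A_i)$ and hence $b\in\ker^{\zz}(A)$, and conversely decompose each $\pi(i,b)\in\ker^{\zz}(A_i)$ into Graver elements. You actually spell out the row/path support claim linking $Ab=0$ to $A_i\pi(i,b)=0$ for all $i$, which the paper takes for granted, so your write-up is slightly more detailed but not methodologically different.
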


Next we will state the formal version of our main lemma. If $\rho=1$ this is the same statement as \autoref{lem:main_lemma_intuition}. We describe the equivalence briefly. We define partitions $P_0, \dots, P_t$ of the set $\{1, \dots, n\}$ (the numeration of leaves in the multistage tree) as follows. For vertex $v\in V$ denote the set of leaves in the subtree of $v$ by 
\begin{equation*}
    L_v = \{n(v')\mid v'\in V\text{ is a leaf in the subtree of }v\}.
\end{equation*}
For every $0\leq i\leq t$ partition $P_i$ is defined by the sets $L_v$ of vertices $v\in V$ of height $i$. Clearly, this yields a partition for every height as every leaf of the tree is in the subtree of exactly one vertex of that height. \autoref{lem:main_lemma_intuition} requires equality when the sums are projected on the indices of the vertex for the interval. In the following lemma, a vector $b$ of dimension $col(A)$ combines the indices of every vertex in the tree. The sums are compared to the corresponding components of this vector, which is equivalent to the condition in \autoref{lem:main_lemma_intuition}.

\begin{lemma}\label{lem:small_valid_submultisets}
Consider a multistage tree $\mathcal{T}(A)$ and multisets $T_1, \dots, T_n \subset \zz^d$, where each $T_i$ contains only sign-compatible elements $\tau$ with  $\norm{\tau}_{\infty}\leq \Delta$. The multisets $T_1, \dots, T_n$ are $\rho$-valid for $\mathcal{T}(A)$ regarding a vector $b\in\rr^{col(A)}$. 

If $\norm{b}_{\infty} > \rho \cdot 2^{(d\Delta)^{\mathcal{O}(d^{3t})}}$, then there exist submultisets $S_i \subseteq T_i$, which are not all empty, and valid for $\mathcal{T}(A)$ with respect to $\hat{b}\in\zz^{col(A)}$ with $\norm{\hat{b}}_{\infty}\leq 2^{(d\Delta)^{\mathcal{O}(d^{3t})}}$.
\end{lemma}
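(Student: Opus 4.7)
The plan is to prove the lemma by induction on the height $t$ of the multistage tree $\mathcal{T}(A)$. The base case $t=1$ is essentially the $\rho$-valid generalization of Klein's \autoref{lem:Kims_lemma}: the same Steinitz-style counting argument that underlies \autoref{lem:Kims_lemma} can be run while absorbing the slack $\rho$ into the multiplicative threshold on $\norm{b}_{\infty}$, yielding non-empty submultisets whose root-projected sums agree exactly and have $\ell_\infty$-norm within the claimed form for $t=1$.

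For the inductive step, let $v_1,\dots,v_k$ denote the children of the root of $\mathcal{T}(A)$; each $v_j$ roots a subtree $\mathcal{T}_j$ of height $t-1$. The proof proceeds in two phases. In Phase~1 (recursive extraction), I would repeatedly invoke the inductive hypothesis inside each $\mathcal{T}_j$ on the multisets $\{T_i : i\in L_{v_j}\}$ to peel off a long sequence of valid sub-collections $\Sigma_j^{(1)},\Sigma_j^{(2)},\dots$, each with sum $\hat{b}_j^{(\ell)}$ of $\ell_\infty$-norm at most $B_{t-1}:=2^{(d\Delta)^{\mathcal{O}(d^{3(t-1)})}}$. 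After removing $\Sigma_j^{(\ell)}$, the residual multisets remain $\rho$-valid with respect to the updated vector $b-\sum_\ell \hat{b}_j^{(\ell)}$, so the inductive hypothesis can be reapplied as long as that residual stays large enough on the columns belonging to $\mathcal{T}_j$; the assumption $\norm{b}_{\infty}>\rho\cdot 2^{(d\Delta)^{\mathcal{O}(d^{3t})}}$ is calibrated precisely so that enough such iterations are available.

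In Phase~2 (root coordination), I would view the vectors $\hat{b}_j^{(\ell)}$ projected onto the $s_0$ columns of the root (call this projection $q$) as new multisets $U_j:=\{q(\hat{b}_j^{(\ell)})\}_\ell$, each element of which has $\ell_\infty$-norm at most $B_{t-1}$. Since for every $j$ the aggregate $\sum_\ell q(\hat{b}_j^{(\ell)})$ is within slack $\mathcal{O}(\rho|L_{v_j}|)$ of the same target (namely the restriction of $b$ to the root's $s_0$ columns), the collection $U_1,\dots,U_k$ is $\rho'$-valid in the height-$1$ sense, for a manageable $\rho'$. Applying the already-established $t=1$ case of the lemma to the $U_j$'s produces non-empty sub-selections $\widetilde U_j\subseteq U_j$ whose root-projected totals all coincide. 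Taking $S_i:=\bigcup_{(j,\ell):\,\hat{b}_j^{(\ell)}\in\widetilde U_j}S_i^{(\ell)}$ yields submultisets of the original $T_i$ that are simultaneously valid on every level of the tree, with $\norm{\hat{b}}_\infty$ bounded by $B_{t-1}$ times the height-$1$ bound applied to entries of norm $B_{t-1}$; a direct computation then shows this telescopes to $B_t=2^{(d\Delta)^{\mathcal{O}(d^{3t})}}$.

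The main obstacle is the quantitative bookkeeping of Phase~1. One must ensure that the residual ``$b$'' inside each subtree $\mathcal{T}_j$ remains above the inductive threshold $\rho\cdot B_{t-1}$ long enough that Phase~2 receives at least the height-$1$ threshold number of units per child. Making this count tight---so that the exponent grows from $d^{3(t-1)}$ to precisely $d^{3t}$ rather than incurring an extra tower level---is the technically delicate part, and is conceptually the reason why the lemma collapses the tower relative to the iterative use of \autoref{lem:Kims_lemma} in prior work.
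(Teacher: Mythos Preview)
Your plan is precisely the ``iterate Klein's lemma over the stages'' approach that the paper explicitly sets out to avoid, and the claim that the bound ``telescopes to $B_t=2^{(d\Delta)^{\mathcal{O}(d^{3t})}}$'' is false. In Phase~2 you feed the height-$1$ lemma multisets $U_j$ whose entries have $\ell_\infty$-norm up to $B_{t-1}=2^{(d\Delta)^{\mathcal{O}(d^{3(t-1)})}}$. The height-$1$ bound then reads $2^{(s_0 B_{t-1})^{\mathcal{O}(s_0^3)}}$, and since $B_{t-1}$ is itself of the form $2^{X}$ with $X=(d\Delta)^{\mathcal{O}(d^{3(t-1)})}$, you obtain
\[
2^{(s_0 B_{t-1})^{\mathcal{O}(s_0^3)}}\;=\;2^{2^{\mathcal{O}(s_0^3)\cdot X}},
\]
which is one tower level higher than $B_{t-1}$, not $2^{X^{\mathrm{poly}(d)}}$. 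Unrolling the induction therefore yields a tower of height $t{+}1$, exactly the bound of prior work, not $2^{(d\Delta)^{\mathcal{O}(d^{3t})}}$. The crucial point is that in every known form of the height-$1$ (Klein-type) lemma the entry bound $\Delta$ sits \emph{inside} the inner exponent; you cannot compose such a statement with itself without stacking exponentials. Your acknowledgement that the ``technically delicate part'' is making the exponent grow from $d^{3(t-1)}$ to $d^{3t}$ is exactly where the argument breaks: no amount of bookkeeping in Phase~1 repairs the fact that Phase~2 adds a full exponential.

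The paper's proof is structurally different and avoids induction on $t$ altogether. It first proves an auxiliary lemma (\autoref{lem:single_element}/\autoref{lem:almost_partitioning}) which, via intersections of cones and Carath\'eodory, replaces $m$ approximately-equal multisets by a single common multiset whose entries are bounded by $(d\Delta)^{d^2}$---polynomial in $\Delta$, not exponential. This lemma is then applied once per level in a \emph{bottom-up} pass over the tree, so the entry bound at height $j$ grows to $\Delta_j=(d\Delta)^{d^{3j}}$ (polynomial blow-up per level, hence $\Delta_t=(d\Delta)^{d^{3t}}$ at the root). A separate \emph{top-down} reconstruction (the Claim inside the proof) then converts a single high-multiplicity element at the root back into genuine integral submultisets at the leaves, using lcm-scaling to undo the fractional basis decompositions. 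The two passes together give the stated bound without any recursive self-application of the full lemma.
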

This is also a generalization of Theorem~9 of Cslovjecsek \etal~\cite{DBLP:journals/corr/abs-2012-11742_esa}.
In our notation their bound only applies to multistage trees of height $1$ or in other words to two-stage stochastic matrices. 

The proof of \autoref{lem:small_valid_submultisets} is postponed to \autoref{sec:proof}. Instead, this section focuses on two applications of the lemma. First, we show that the Graver complexity and second we show that the proximity of multistage stochastic matrices is bounded.
Our proof for proximity follows the proof structure of Cslovjecsek \etal~\cite{DBLP:journals/corr/abs-2012-11742_arxiv}.

{
\renewcommand{\thecorollary}{\ref{theorem:graver_multistage}}
\begin{corollary}
  The $\ell_{\infty}$ Graver complexity of a multistage stochastic matrix $A$ is bounded by
\[g_{\infty}(A) \leq 2^{(d\Delta)^{\mathcal{O}(d^{3t+1})}}.\]
\end{corollary}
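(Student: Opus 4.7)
The plan is to argue by contradiction using \autoref{lem:small_valid_submultisets} applied to a Graver decomposition of the alleged large Graver element. Let $y \in \mathcal{G}(A)$ and suppose for contradiction that $\norm{y}_\infty > 2^{(d\Delta)^{\mathcal{O}(d^{3t+1})}}$.

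First I would exhibit the right multisets to feed into \autoref{lem:small_valid_submultisets}. Since $y \in \ker^{\zz}(A)$, for every leaf $i \leq n$ the projection $\pi(i,y)$ lies in $\ker^{\zz}(A_i)$, so it admits a sign-compatible decomposition into Graver elements of $A_i$: there exist multisets $G_i \subset \mathcal{G}(A_i)$ such that $\sum_{g \in G_i} g = \pi(i,y)$ and each $g \in G_i$ is sign-compatible with $\pi(i,y)$. By \autoref{ob:valid_iff_kernel} the multisets $G_1, \ldots, G_n$ are valid (i.e.\ $1$-valid) for $\mathcal{T}(A)$ with respect to $y$. The submatrix $A_i$ has $d$ columns, so by \autoref{lem:graver_size_by_columns} every $\tau \in G_i$ satisfies $\norm{\tau}_\infty \leq \Delta' := (2d\Delta + 1)^d$. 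Moreover, since all elements of $G_i$ are sign-compatible with the fixed vector $\pi(i,y)$, they are sign-compatible with each other, so the hypothesis of \autoref{lem:small_valid_submultisets} is satisfied.

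Next I would plug these into \autoref{lem:small_valid_submultisets} with $\rho = 1$ and bound $\Delta'$. The threshold becomes $2^{(d\Delta')^{\mathcal{O}(d^{3t})}}$, and since $d\Delta' = d(2d\Delta+1)^d = (d\Delta)^{\mathcal{O}(d)}$, this threshold simplifies to $2^{(d\Delta)^{\mathcal{O}(d^{3t+1})}}$. Our assumption $\norm{y}_\infty > 2^{(d\Delta)^{\mathcal{O}(d^{3t+1})}}$ therefore triggers the lemma, producing non-empty submultisets $S_i \subseteq G_i$ that are valid for $\mathcal{T}(A)$ with respect to some $\hat{y} \in \zz^{col(A)}$ satisfying $\norm{\hat{y}}_\infty \leq 2^{(d\Delta)^{\mathcal{O}(d^{3t+1})}}$. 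By \autoref{ob:valid_iff_kernel} applied to $\hat{y}$, we get $\hat{y} \in \ker^{\zz}(A)$.

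Finally I would derive the contradiction to $y$ being a Graver element by showing that $y = \hat{y} + (y - \hat{y})$ is a proper sign-compatible decomposition into two non-zero kernel elements. Sign-compatibility: for every leaf $i$, $\pi(i,\hat{y}) = \sum_{g \in S_i} g$ is a sum of elements each sign-compatible with $\pi(i,y)$, hence $\pi(i,\hat{y})$ is sign-compatible with $\pi(i,y)$; similarly $\pi(i, y - \hat{y}) = \sum_{g \in G_i \setminus S_i} g$ is sign-compatible with $\pi(i,y)$. Gluing across all leaves, both $\hat{y}$ and $y-\hat{y}$ are sign-compatible with $y$. Non-triviality: some $S_i$ is non-empty, and since its elements are sign-compatible and individually non-zero their sum is non-zero, so $\hat{y} \neq 0$. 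On the other side, $\norm{\hat{y}}_\infty \leq 2^{(d\Delta)^{\mathcal{O}(d^{3t+1})}} < \norm{y}_\infty$ forces $\hat{y} \neq y$, hence $y - \hat{y} \neq 0$. This contradicts minimality of $y$ in $\sqsubseteq$ among nonzero kernel elements.

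The main obstacle, and the reason I would double-check this step carefully, is the bookkeeping of sign-compatibility across leaves when gluing the partial decompositions $\pi(i,\hat{y}) = \sum_{g \in S_i} g$ into a single vector $\hat{y}$ on $col(A)$; the fact that the $G_i$ come from a single sign-compatible Graver decomposition of the fixed vector $y$ is what makes this consistent across different branches of $\mathcal{T}(A)$. The exponent arithmetic $d \cdot (2d\Delta+1)^d \to (d\Delta)^{\mathcal{O}(d)}$ absorbed into the outer $\mathcal{O}(d^{3t})$ is routine but must be done cleanly to land exactly at the advertised $(d\Delta)^{\mathcal{O}(d^{3t+1})}$ exponent.
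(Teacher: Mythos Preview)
Your proof is correct and follows essentially the same route as the paper: decompose each $\pi(i,y)$ into sign-compatible Graver elements of $A_i$ (bounded via \autoref{lem:graver_size_by_columns}), apply \autoref{lem:small_valid_submultisets} with $\rho=1$ and $\Delta'=(2d\Delta+1)^d$ to extract a small nonzero $\hat y\in\ker^{\zz}(A)$ with $\hat y\sqsubset y$, and conclude that $y$ cannot be $\sqsubseteq$-minimal. Your write-up is in fact more explicit than the paper's about the sign-compatibility gluing and the nonvanishing of $y-\hat y$; the only cosmetic slip is calling the $S_i$ ``non-empty'' when the lemma only guarantees that not all of them are empty, which you then use correctly anyway.
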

\addtocounter{corollary}{-1}
}
\begin{proof}
Let $y\in\ker^{\zz}(A)$ be a kernel element of $\mathcal{A}$ and assume that $\norm{y}_{\infty} > 2^{(d\Delta)^{\mathcal{O}(d^{3t+1})}}$. By \autoref{ob:valid_iff_kernel} there exist multisets $G_i \subset \mathcal{G}(A_i)$ that are valid for $\mathcal{T}(A)$ regarding $y$ and by \autoref{lem:graver_size_by_columns} the Graver complexity of the submatrices is bounded by $g_{\infty}(A_i) \leq (2d\Delta +1)^{d} =: \Delta'$ for every $1\leq i\leq n$. 

We apply \autoref{lem:small_valid_submultisets} to the multisets $G_i$ and hence there exist submultisets $S_i\subseteq G_i$ for every $i\leq n$ which are valid for $\mathcal{T}(A)$ regarding some $\bar{y} \in \zz^{col(A)}$ with  
\begin{equation*}
    \norm{\bar{y}}_{\infty} \leq 2^{(d\Delta')^{\mathcal{O}(d^{3t})}} \leq 2^{(d(2d\Delta + 1)^{d})^{\mathcal{O}(d^{3t})}} \leq 2^{(d\Delta)^{\mathcal{O}(d^{3t+1})}}.
\end{equation*} 
Using again \autoref{ob:valid_iff_kernel}, but now in the other direction, we get that $\bar{y}\in \ker^{\zz}(A)$. 
At least one submultiset $S_i$ is non-empty. The elements in each set are sign-compatible and non-zero. Hence $\pi(i,\bar{y})$ is nonzero and in particular $\bar{y}$ is. 
The vector $y$ is not in the Graver base of $A$ since it is not minimal by $\bar{y} \sqsubset y$.
\end{proof}

{
\renewcommand{\thelemma}{\ref{lem:proximity}}
\begin{lemma}
  Suppose $P=(x,A,b,c)$ is a linear program and $A$ is a multistage stochastic matrix. Then
\[proximity_{\infty}(P) \leq 2^{(d\Delta)^{\mathcal{O}(d^{3t+1})}}.\]
\end{lemma}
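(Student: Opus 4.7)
The plan is to mirror the two-stage proximity argument of Cslovjecsek \etal~\cite{DBLP:journals/corr/abs-2012-11742_arxiv}, substituting our generalized \autoref{lem:small_valid_submultisets} for its two-stage predecessor. Given any $x^{frac}\in Sol^{\rr}(P)$ and $x^{int}\in Sol^{\zz}(P)$, I would set $y:=x^{int}-x^{frac}$; it suffices to show that whenever $\norm{y}_\infty > \rho:=2^{(d\Delta)^{\mathcal{O}(d^{3t+1})}}$ there exists a nonzero $\hat{y}\in\ker^{\zz}(A)$ with $\hat{y}\sqsubseteq y$. Indeed, $\tilde{x}^{int}:=x^{int}-\hat{y}$ is then a feasible integer solution with $\tilde{x}^{int}-x^{frac}\sqsubseteq x^{int}-x^{frac}$, and iterating this replacement strictly decreases the nonnegative integer measure $\sum_k\lfloor |y_k|\rfloor$ by $\sum_k|\hat{y}_k|>0$, so the process terminates with $\norm{y}_\infty\leq\rho$, yielding the desired proximity bound.

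To construct $\hat{y}$, I would decompose $\pi(i,y)\in\ker^{\rr}(A_i)$ for every leaf $i$ into sign-compatible Graver elements. A Carath\'eodory decomposition inside the closed orthant of $\pi(i,y)$ gives $\pi(i,y)=\sum_{j=1}^{d}\alpha_j g_j^{(i)}$ with $\alpha_j\geq 0$ and $g_j^{(i)}\in\mathcal{G}(A_i)$ sign-compatible with $\pi(i,y)$, using the standard fact that Graver elements lying in a fixed closed orthant generate the rational cone $\ker^{\rr}(A_i)\cap(\text{orthant})$. Let $G_i$ consist of $\lfloor\alpha_j\rfloor$ copies of each $g_j^{(i)}$; the residual $\sum_j\{\alpha_j\}g_j^{(i)}$ has $\ell_\infty$-norm at most $d\cdot g_\infty(A_i)\leq d(2d\Delta+1)^d =:\Delta'$ by \autoref{lem:graver_size_by_columns}. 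Hence $G_1,\ldots,G_n$ are $(\Delta'+1)$-valid for $\mathcal{T}(A)$ with respect to $y$, with all their elements of $\ell_\infty$-norm at most $\Delta'$ and pairwise sign-compatible within each $G_i$. Feeding these multisets into \autoref{lem:small_valid_submultisets} with $\Delta'$ in the role of $\Delta$ and $\rho_0:=\Delta'+1$ in the role of $\rho$, the required threshold $\rho_0\cdot 2^{(d\Delta')^{\mathcal{O}(d^{3t})}}$ is dominated by $\rho$, so the lemma returns nonempty submultisets $S_i\subseteq G_i$ valid for $\mathcal{T}(A)$ with respect to some $\hat{y}\in\zz^{col(A)}$. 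By \autoref{ob:valid_iff_kernel}, $\hat{y}\in\ker^{\zz}(A)$. Sign-compatibility inside each $G_i$ prevents cancellation, forcing $\hat{y}\neq 0$, and ensures that the restriction of $\hat{y}$ to each column $k$ on the path to leaf $i$ equals $\sum_{g\in S_i}g_k$, a sum with the same sign as $y_k$ and magnitude at most $\sum_{g\in G_i}|g_k|\leq |y_k|$, so $\hat{y}\sqsubseteq y$ as required.

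The main obstacle is the parameter bookkeeping in the application of \autoref{lem:small_valid_submultisets}: after substituting $\Delta'=(2d\Delta+1)^d$ for $\Delta$ inside the lemma's bounds, one must verify that both the applicability threshold $\rho_0\cdot 2^{(d\Delta')^{\mathcal{O}(d^{3t})}}$ and the output norm $2^{(d\Delta')^{\mathcal{O}(d^{3t})}}$ collapse back to $2^{(d\Delta)^{\mathcal{O}(d^{3t+1})}}$, with the Carath\'eodory-induced factor $\Delta'$ absorbing the extra $d$ in the exponent $d^{3t+1}$. A secondary subtlety is justifying the orthant-wise decomposition for the fractional vector $\pi(i,y)$, which requires the rational-cone version of Carath\'eodory in dimension $d$ rather than the usual integer Carath\'eodory theorem.
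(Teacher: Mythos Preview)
Your proposal is correct and shares the paper's overall architecture: build $\rho$-valid multisets of sign-compatible Graver elements at each leaf, invoke \autoref{lem:small_valid_submultisets}, and extract a conformal integral kernel element via \autoref{ob:valid_iff_kernel}. The substantive difference lies in how the multisets $G_i$ are produced. Instead of your real Carath\'eodory decomposition of the fractional vector $\pi(i,y)$ followed by flooring, the paper first applies the single-block proximity bound \autoref{lem:proximity_by_columns} to each subprogram $P_i$: this gives an \emph{integer} point $\hat{x}^{int}_i\in Sol^{\zz}(P_i)$ with $\norm{\hat{x}^{int}_i-\pi(i,x^{frac})}_\infty\leq(d\Delta)^{d+1}$ and $\hat{x}^{int}_i-\pi(i,x^{frac})\sqsubseteq\pi(i,\tilde{x}^{int})-\pi(i,x^{frac})$, after which the \emph{integer} kernel element $\hat{x}^{int}_i-\pi(i,\tilde{x}^{int})$ admits a standard conformal Graver decomposition. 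The paper's route thus sidesteps the real-cone generation claim you correctly flag as a subtlety, trading it for the black-box \autoref{lem:proximity_by_columns}; your route is more self-contained but must justify that Graver elements in a closed orthant generate $\ker^{\rr}(A_i)$ intersected with that orthant. Both give the same final parameter dependence, since $(d\Delta)^{d+1}$ and your $\Delta'=d(2d\Delta+1)^d$ are each $(d\Delta)^{\mathcal{O}(d)}$, and your iteration argument is equivalent to the paper's device of picking a $\sqsubseteq$-minimal $\tilde{x}^{int}$ and arguing by contradiction.
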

\addtocounter{lemma}{-1}
}
\begin{proof}
Consider any $x^{frac}\in Sol^{\rr}(P)$ and $x^{int}\in Sol^{\zz}(P)$. Let $\tilde{x}^{int}\in Sol^{\zz}(P)$ be an integral solution such that $\tilde{x}^{int} - x^{frac} \sqsubseteq x^{int} - x^{frac}$ and subject to the condition that $\norm{\tilde{x}^{int} - x^{frac}}_{\infty}$ is minimized.

If there exists a non-zero vector $u\in \ker^{\zz}(A)$ such that $u \sqsubseteq x^{frac} - \tilde{x}^{int}$, then $\tilde{x}^{int}+u \in Sol^{\zz}(P)$ would be a solution with $(\tilde{x}^{int}+u) - x^{frac} \sqsubseteq \tilde{x}^{int} - x^{frac} \sqsubseteq x^{int} - x^{frac}$ and the $\ell_{\infty}$  distance from $x^{frac}$ to $\tilde{x}^{int}+u$ would be strictly smaller than to $\tilde{x}^{int}$. 
The existence of such an $u$ would hence contradict the choice of $\tilde{x}^{int}$. It is sufficient to show that in the case that $\norm{\tilde{x}^{int} - x^{frac}}_{\infty} > 2^{(d\Delta)^{\mathcal{O}(d^{3t+1})}}$ there exists a non-zero vector  $u\in \ker^{\zz}(A)$ such that $u \sqsubseteq x^{frac} - \tilde{x}^{int}$.

For every $i\in\{1, \dots, n\}$ we have that $\pi (i,x^{frac}) \in Sol^{\rr}(P_i)$ and $\pi (i,\tilde{x}^{int}) \in Sol^{\zz}(P_i)$.
Every program $P_1, \dots, P_n$ has $d$ columns. By \autoref{lem:proximity_by_columns}, the proximity is bounded by $proximity_{\infty}(P_i)\leq (d \Delta)^{d +1} =: \rho$. By definition of proximity there exists an integral solution $\hat{x}^{int}_i \in Sol^{\zz}(P_i)$ such that
\begin{equation*}
    \norm{\hat{x}^{int}_{i} - \pi (i,x^{frac})}_{\infty}\leq \rho \quad 
    \text{ and } \quad 
    \hat{x}^{int}_i - \pi (i,x^{frac}) \sqsubseteq \pi (i,\tilde{x}^{int}) - \pi (i,x^{frac}).
\end{equation*}
We have $\hat{x}^{int}_i - \pi (i,\tilde{x}^{int}) \in \ker^{\zz}(A_i)$ which can be decomposed into a multiset $G_i$ of Graver elements. Then $G_i$ is a multiset of sign-compatible elements of $\mathcal{G}(A_i)$ with
\begin{equation*}
    \hat{x}^{int}_i - \pi (i,\tilde{x}^{int}) = \sum_{g \in G_i}g.
\end{equation*}

We want to apply \autoref{lem:small_valid_submultisets}. Therefore, we show that the multisets $G_i$ are $\rho$-valid for the multistage tree, which is the case since
\begin{equation*}
    \norm{\sum_{g\in G_i}g - \pi (i, x^{frac} - \tilde{x}^{int})}_{\infty} 
    = \norm{\hat{x}^{int}_i - \pi (i,\tilde{x}^{int}) - \pi (i, x^{frac}) + \pi (i,\tilde{x}^{int})}_{\infty}  
    \leq \rho.
\end{equation*}
Let $\gamma := \max_{i\leq n}g_{\infty}(A_i) \leq (2d\Delta + 1)^{d}$ by \autoref{lem:graver_size_by_columns}. 
If  $\norm{x^{frac} - \tilde{x}^{int}}_{\infty} \geq 2^{(d\Delta)^{\mathcal{O}(d^{3t+1})}}$, then \autoref{lem:small_valid_submultisets} can be applied
and there exist non-zero submultisets $S_i\subseteq G_i$ that are valid for $\mathcal{T}(A)$ regarding a vector $u \in \zz^{col(A)}$ with $\norm{u}_{\infty} \leq 2^{(d\Delta)^{\mathcal{O}(d^{3t+1})}}$. By \autoref{ob:valid_iff_kernel} the vector $u \in \ker^{\zz}(A)$ is in the kernel of $A$.

The vector $u$ is non-zero as at least one submultiset is non-empty and every element of the multiset is non-zero and sign-compatible. Further, we have $u\sqsubseteq x^{frac} - \tilde{x}^{int}$ since 
\begin{equation*}
    \tilde{u}_i = \sum_{g \in S_i}g \sqsubseteq \sum_{g \in G_i}g = \hat{x}^{int}_i - \pi (i,x^{frac}) \sqsubseteq \pi (i,\tilde{x}^{int}) - \pi (i,x^{frac}).
\end{equation*}{\setlength{\medskipamount}{-0.88cm}\medskip}

\phantom{.}\hfill 
\end{proof}

Due to \autoref{lem:proximity_definitions_are_equiv} this bound applies to the proximity of multistage stochastic IPs in the classical sense.

\section{Proof of \autoref{lem:small_valid_submultisets}}\label{sec:proof}
The \emph{cone} and the \emph{convex hull} spanned by vectors $c_1, \ldots, c_k \in \qq^d$ are defined by
\begin{align*}
    &\cone(c_1, \ldots, c_k) = \{\sum_{i=1}^k\lambda_ic_i \mid \lambda\in\rr^k_{\geq 0}\} \quad\text{ and }\\
    &conv(c_1, \ldots, c_k) = \{\sum_{i=1}^k\lambda_ic_i \mid \lambda\in\rr^k_{\geq 0}, \; \norm{\lambda}_{1} = 1\}.
\end{align*}
Roughly speaking, we construct new multisets consisting of elements of the intersection of cones.
The important property of the constructed multisets is, that they \emph{represent} a submultiset for every child in the multistage tree. By transitivity, the multiset for the root then represents a submultiset for every leaf.

Let $P\subset \zz^d$ be the subset of integral vectors bounded by $\Delta$ in infinity norm and let $\mathcal{B}$ denote the set of $d\times d$ bases with integral entries bounded by $\Delta$ in infinity norm. 
As a preparation for the main proof, we elaborate properties for almost partitioning some multisets to one new multiset. 

We represent a multiset $T$ by a \emph{multiplicity vector} $\lambda$, where $\lambda_p$ denotes how often multiset $T$ contains element $p$. 
By allowing fractional multiplicity vectors we may divide a vector $p$ into several parts, e.g. two times \emph{half of a vector $p$}. When a mathematical operation on a multiplicity vector $\lambda$ requires an index $p$ that is not defined for $\lambda$, then it is treated as $\lambda_p = 0$, similar to the fact that this vector is not included in the multiset represented by $\lambda$. 
For a vector $b \in \rr^d$ and a matrix $M\in \rr^{d\times n}$ we write $b\in M$ if $b$ is a column of $M$. 

The following lemma shows the existence of an element in the intersection cone, that can be represented as a fractional submultiset of every multiset $i\leq n$. Such an element $b$ is represented by only vectors of one basis $B$ and by a fractional multiplicity vector $x$ for that basis, in particular $Bx = b$. Every vector $p\in B$ is used $x_p$ times in the representation. As $x$ is uniquely defined by $x = B^{-1}b$, we will treat $B^{-1}b$ as a multiplicity vector for the column vectors of $B$. The proof goes similar to the \emph{stronger Klein bound} in~\cite{DBLP:journals/corr/abs-2012-11742_arxiv}.

\begin{lemma}\label{lem:single_element}
Consider fractional multisets represented by multiplicity vectors $\lambda^{(1)}, \dots, \lambda^{(m)}\in \qq^P_{\geq 0}$  and a vector $b\in \rr^d$  such that for every $i \leq m$ we have 
\[\norm{\sum_{p\in P}\lambda^{(i)}_pp - b}_{\infty} < \rho.\]
If $\norm{b}_{\infty} >\rho \cdot (d\Delta)^{\mathcal{O}(d^2)}$, then there exist bases $B^{(1)}, \dots, B^{(m)}\in \mathcal{B}$ and $\hat{b}\in\zz^d$ such that
\begin{enumerate}[(i)]
    \item $0 \leq x^{(i)} \leq \lambda^{(i)}$ for $x^{(i)} := (B^{(i)})^{-1}\hat{b}$,
    \item $\norm{\hat{b}}_{\infty}\leq (d\Delta)^{d^2}$.
\end{enumerate}
\end{lemma}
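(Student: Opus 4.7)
My plan is to adapt the strategy of the \emph{stronger Klein bound} of Cslovjecsek et al.~\cite{DBLP:journals/corr/abs-2012-11742_arxiv} to the present fractional, $\rho$-slack setting. First I would reduce each multiplicity vector to a representation by at most $d$ vectors. For each $i\leq m$ put $v^{(i)} := \sum_{p\in P}\lambda^{(i)}_p\,p$, so that $\norm{v^{(i)} - b}_{\infty}<\rho$, and consider the polytope $Q^{(i)} := \{y\in\rr^P_{\geq 0} : \sum_p y_p p = v^{(i)},\ y\leq\lambda^{(i)}\}$. Since $\lambda^{(i)}\in Q^{(i)}$ this polytope is nonempty, and a basic feasible solution $y^{(i)\star}$ of $Q^{(i)}$ has at most $d$ coordinates strictly between $0$ and $\lambda^{(i)}_p$. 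Using the free coordinates together with saturated support vectors I can pick (or, if necessary, extend to) a basis $B^{(i)}\in\mathcal{B}$ of $d$ linearly independent elements of $P$ on which $\lambda^{(i)}$ is strictly positive, giving $m$ simplicial bases.

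Next I would look for a short integer point in the intersection of the resulting simplicial cones. Let $C := \bigcap_{i=1}^{m}\cone(B^{(i)})$. Each $\cone(B^{(i)})$ is cut out by $d$ half-spaces whose facet normals are, up to sign, $(d-1)\times(d-1)$ subdeterminants of $B^{(i)}$ and therefore have entries bounded by $(d-1)!\,\Delta^{d-1}\leq(d\Delta)^{d-1}$. Hence $C$ is a rational polyhedral cone cut out by at most $md$ half-spaces of bounded coefficient size. Because $\norm{b}_{\infty}$ is much larger than $\rho$, the vector $b$ lies essentially inside each $\cone(B^{(i)})$, so $C$ is full-dimensional and contains a ray close to $b$. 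A Cramer-style argument applied to the $d$-tuples of active facet constraints defining an extreme ray of $C$ then yields a nonzero integer $\hat{b}\in C$ pointing in that direction with $\norm{\hat{b}}_{\infty}\leq(d\Delta)^{d^2}$, which settles part~(ii).

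The third task, and the main obstacle, is to establish the upper bound in~(i). Non-negativity of $x^{(i)} := (B^{(i)})^{-1}\hat{b}$ is immediate from $\hat{b}\in\cone(B^{(i)})$. For the upper bound I would keep $\hat{b}$ as a short integer vector pointing essentially in the direction of $b$ and use Cramer's rule to bound each entry of $(B^{(i)})^{-1}\hat{b}$ by $\norm{\hat{b}}_{\infty}\cdot(d\Delta)^{d-1}/\abs{\det B^{(i)}}$, then compare these bounds with the LP-vertex multiplicities $y^{(i)\star}$ from the first step. The hypothesis $\norm{b}_{\infty}>\rho\cdot(d\Delta)^{\mathcal{O}(d^2)}$ is exactly what should supply enough slack to force the entrywise inequality $x^{(i)}\leq\lambda^{(i)}$ simultaneously for every $i$. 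This simultaneous control across the $m$ bases is the delicate point: small values of $\lambda^{(i)}_p$ on basis vectors could in principle ruin the bound, so the LP-vertex step has to be refined — for instance by preferentially selecting basis columns on which $\lambda^{(i)}$ is bounded away from zero — and the polynomial in the hypothesis tuned accordingly.
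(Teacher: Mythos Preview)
Your outline follows the right template (simplicial bases, intersection cone, short generator), but the third paragraph is not a plan so much as a confession that the key step is missing, and the difficulty you flag there is real and is not resolved by ``preferentially selecting basis columns''. Two concrete gaps:

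\medskip
\noindent\textbf{(a) Membership of $b$ in the cones.} You only have $v^{(i)}\in\cone(B^{(i)})$, not $b\in\cone(B^{(i)})$. Since $\norm{v^{(i)}-b}_\infty<\rho$, the vector $b$ can lie strictly outside $\cone(B^{(i)})$, and then the intersection $C$ need not contain $b$ at all; ``$b$ lies essentially inside'' is not enough to produce a generator of $C$ in the direction of $b$. The paper fixes this by first \emph{shifting}: choose $r^{(i)}$ with $\norm{r^{(i)}}_\infty\le\rho$ so that $z^{(i)}:=\lambda^{(i)}+r^{(i)}$ satisfies $\sum_p z^{(i)}_p p = b$ exactly. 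Then every $z^{(i)}$ lives in the \emph{same} polyhedron $Q=\{x\ge 0:\sum_p x_p p=b\}$, and $b$ sits exactly in every vertex cone of $Q$.

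\medskip
\noindent\textbf{(b) Lower bounds on the basis multiplicities.} A vertex of your polytope $Q^{(i)}$ gives you at most $d$ fractional coordinates, but says nothing about how large $\lambda^{(i)}$ is on those coordinates; they can be arbitrarily small, and then no bound of the form $x^{(i)}\le\lambda^{(i)}$ is available. The paper's mechanism is different and is the idea you are missing: write each $z^{(i)}\in Q$ in Carath\'eodory form $z^{(i)}=\sum_j\gamma_j u^{(j)}+\text{(recession part)}$ with $\sum_j\gamma_j=1$, so by pigeonhole some vertex $u^{(j(i))}$ has $\gamma_{j(i)}\ge 1/(d+1)$ and hence
\[
\tfrac{1}{d+1}\,u^{(j(i))}\ \le\ z^{(i)}.
\]
Take $B^{(i)}$ to be the basis supporting $u^{(j(i))}$. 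Now the intersection of the vertex cones contains $b$ exactly, a generator $\hat b$ of that intersection has $\norm{\hat b}_\infty\le(d\Delta)^{d^2}$, and writing $b=\sum_k\alpha_k c_k$ with one $\alpha_k>2\rho$ (pigeonhole, using $\norm{b}_\infty>\rho\cdot(d\Delta)^{\mathcal{O}(d^2)}$) lets you peel off $2\rho\,(B^{(i)})^{-1}\hat b$ from $u^{(j(i))}/(d+1)\le z^{(i)}$. The factor $2\rho$ is exactly what absorbs the shift $r^{(i)}$ and delivers $(B^{(i)})^{-1}\hat b\le\lambda^{(i)}$.

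\medskip
In short: do not build a separate polytope per $i$; shift all $\lambda^{(i)}$ onto a common target $b$, take the bases from the vertices of the single polyhedron $Q$, and use the Carath\'eodory weight $\ge 1/(d+1)$ to guarantee room. That is the step your proposal is lacking.
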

\begin{proof}
For every $i\leq m$ let $r^{(i)}\in\rr^P$ be such that 
\[\sum_{p\in P}(\lambda^{(i)}+r^{(i)})p = b\]
and w.l.o.g. $\norm{r^{(i)}}_{\infty}\leq \rho$ using only the unit vectors. Define their sum $z^{(i)}:= \lambda^{(i)}+ r^{(i)}$ for every $i\leq m$.
Every $z^{(i)}$ belongs to the polyhedron $Q=\{x \in \rr^P_{\geq 0} \mid  \sum_{p\in P}x_p p = b\}$.
By Minkowski-Weyl theorem~\cite{schrijver1998theory}, the polyhedron can be written as
\begin{equation*}
    Q= conv(u^{(1)}, \dots, u^{(\ell)}) + \cone(v^{(1)}, \dots, v^{(p)}) \quad\text{ for some  }  u^{(1)}, \dots, u^{(\ell)}, v^{(1)}, \dots, v^{(p)} \in \zz^P
\end{equation*}
where $u^{(j)}\geq \mathbf{0}$ and $ \sum_{p\in P}u^{(j)}_p p = b$, and $v^{(j)}\geq \mathbf{0}$ and $ \sum_{p\in P}v^{(i)}_p p = \mathbf{0}$.

Every $u^{(j)}$ is a vertex solution to the linear program of $Q$ and has at most $d$ non-zero entries and an invertible matrix $C^{(j)}$ such that $C^{(j)}u^{(j)} = b$. 
Every $z^{(i)}$ can be written as 
\begin{equation*}
    z^{(i)} = \sum_{j=1}^{\ell}\gamma_ju^{(j)} + \sum_{k=1}^p\mu_kv^{(k)}, \quad\text{ where }\sum_{j=1}^{\ell}\gamma_j=1\text{ and }\gamma_j, \mu_k\in\rr_{\geq 0}.
\end{equation*}
Hence, for every $i\leq m$ there exists by Carathéodory's theorem~\cite{schrijver1998theory} and by pigeonhole principle an index $j(i)\leq \ell$ with $\gamma_{j(i)}\geq \frac{1}{d+1}$. Since all scalars and vectors are non-negative, we have $0\leq \frac{1}{d+1}u^{(j(i))}\leq z^{(i)}$. 

Consider the intersection $\bigcap_{i=1}^{\ell}\cone(C^{(i)})$, which is a cone with some generating set $C\subset \zz^d$. A consequence of the Farkas-Minkowski-Weyl theorem, see e.g. \cite{schrijver1998theory}, is that the generating elements can be bounded by $\norm{c}_{\infty}\leq (d\Delta)^{d^2}$ for every $c\in C$ as described in \cite{DBLP:journals/corr/abs-2012-11742_arxiv}.

For every $i\leq \ell$ we have that $b \in \cone(C^{(i)})$. Hence, the vector is also in the intersection $b\in \cone(C)$ and there exist by Carathéodory's theorem $d$ vectors $c_1, \dots, c_d\in C$ with 
\begin{equation*}
    b= \sum_{k=1}^{d} \alpha_kc_k \quad \text{for some } \alpha\in \rr^{d}_{\geq 0}.
\end{equation*}
By the assumption on the size of $b$ we have that
\begin{equation*}
    \norm{b}_{\infty} > (d+1) d \cdot 2\rho \cdot \max_{i}\norm{c_i}_{\infty} = \rho \cdot (d\Delta)^{\mathcal{O}(d^2)}.
\end{equation*}
By pigeonhole principle there exists $\alpha_k > 2\rho$ and without loss of generality assume $k=1$. For each $j\in \{1, \dots, \ell \}$ there exist $y^{(j)}, \tilde{y}^{(j)}\in \rr^d_{\geq 0}$ with $C^{(j)} y^{(j)} = c_1$ and $C^{(j)} \tilde{y}^{(j)} = \frac{b}{(d+1)} - 2\rho c_1$. We can write 
\[ C^{(j)} u^{(j)} / (d+1) = b / (d+1) = C^{(j)}(2\rho y^{(j)} + \tilde{y}^{(j)}).\]
We have that $2\rho y^{(j)} \leq u^{(j(i))}/(d+1)$ since $C^{(j)}$ is invertible and $\tilde{y}^{(j)}$ is non-negative.

We set $\hat{b} := c_1$ and $B^{(i)} := C^{(j(i))}$. 
The size of the vector is bounded by $\norm{\hat{b}}_{\infty} = \norm{c_1}_{\infty} \leq (d\Delta)^{d^2}$, which is property $(ii)$. 
For the submultiset relation $(i)$ recall that 
\begin{equation*}
0 \leq 2\rho y^{(j(i))} \leq \frac{1}{d+1}u^{(j(i))} \leq z^{(i)} = \lambda^{(i)}+r^{(i)}.  
\end{equation*}
We have $(B^{(i)})^{-1}\hat{b} =(C^{(j(i))})^{-1}c_1 = y^{(j(i))} \leq \lambda^{(i)}$  since $\norm{r^{(i)}}_{\infty}\leq \rho$.
\end{proof}

The new multiset is then obtained by iterating \autoref{lem:single_element}. Every iteration yields an element, $\hat{b}$ in the above lemma, that is added to the new multiset. Each element represents a submultiset for every $i$ as it is a fractional combination of elements from the original multisets. Property $(i)$ there ensures the subset relation.
Let $P'\subset \zz^d$ be the set of integral vectors bounded by $(d\Delta)^{d^2}$ in infinity norm. 
\begin{lemma}\label{lem:almost_partitioning}
Consider fractional multisets represented by multiplicity vectors $\lambda^{(1)}, \dots, \lambda^{(m)}\in \qq^P_{\geq 0}$ and a vector $b\in \rr^d$ such that for every $i \leq m$ we have 
\[\norm{\sum_{p\in P}\lambda^{(i)}_pp - b}_{\infty} < \rho.\]
There exist multiplicity vectors $\lambda[B,i]\in\zz^{P'}_{\geq 0}$ for every $B\in \mathcal B$ and $i\leq m$ 
such that 
\begin{enumerate}[(i)]
    \item $\lambda := \sum_{B \in\mathcal{B}} \lambda[B,1] =\, \dots\, = \sum_{B \in\mathcal{B}} \lambda[B,m]$,
    \item $\sum_{p \in P'}\sum_{B \in \mathcal{B}}\lambda[B,i]_p (B^{-1}p) \leq \lambda^{(i)} $ for every $i\leq m$,
    \item if $\lambda[B,i]_p >0$ then $(B^{-1}p) \geq \mathbf{0}$ for every $B\in \mathcal{B}$,  $p\in P$, and $i\leq m$,
    \item $\norm{\sum_{p\in P'}\lambda_p p - b}_{\infty} \leq \rho \cdot (d\Delta)^{\mathcal{O}(d^2)}$.
\end{enumerate}
\end{lemma}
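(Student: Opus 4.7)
The plan is to construct the multiplicity vectors $\lambda[B,i]$ iteratively by repeatedly invoking \autoref{lem:single_element}, extracting one vector $\hat{b}\in P'$ and one basis per multiset at each step until the residual becomes small. Initialize $\lambda[B,i] := 0$ for every $B \in \mathcal{B}$ and $i \leq m$, set $\lambda^{(i),0} := \lambda^{(i)}$ and $b^{(0)} := b$. The invariant maintained throughout is that the multisets represented by $\lambda^{(i),k}$ are $\rho$-valid with respect to $b^{(k)}$ for every $i\leq m$.

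At step $k$, if $\norm{b^{(k)}}_{\infty} \leq \rho \cdot (d\Delta)^{\mathcal{O}(d^2)}$, halt. Otherwise, apply \autoref{lem:single_element} to the current $(\lambda^{(i),k})_{i}$ and $b^{(k)}$ to obtain $\hat{b}^{(k+1)} \in \zz^d$ with $\norm{\hat{b}^{(k+1)}}_{\infty} \leq (d\Delta)^{d^2}$ (hence $\hat{b}^{(k+1)} \in P'$) together with bases $B^{(i),k+1} \in \mathcal{B}$ satisfying $0 \leq (B^{(i),k+1})^{-1}\hat{b}^{(k+1)} \leq \lambda^{(i),k}$. Then increment $\lambda[B^{(i),k+1}, i]_{\hat{b}^{(k+1)}}$ by one for each $i$, set $b^{(k+1)} := b^{(k)} - \hat{b}^{(k+1)}$, and define $\lambda^{(i),k+1}$ by subtracting $(B^{(i),k+1})^{-1}\hat{b}^{(k+1)}$ from $\lambda^{(i),k}$ viewed as multiplicities on the columns of $B^{(i),k+1}$ (which all lie in $P$). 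The $\rho$-validity invariant is preserved because $\sum_p \lambda^{(i),k+1}_p\,p - b^{(k+1)} = \sum_p \lambda^{(i),k}_p\,p - b^{(k)}$, whose $\ell_{\infty}$-norm is $<\rho$ by induction.

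Once the process halts, the four properties follow directly. For (i), $\sum_B \lambda[B,i]_p$ counts those iterations at which the extracted vector equaled $p$, and this count is independent of $i$ since $\hat{b}^{(k+1)}$ is chosen independently of $i$. For (ii), the total amount subtracted from $\lambda^{(i)}$ across all iterations is precisely $\sum_{B,p} \lambda[B,i]_p (B^{-1}p)$, which equals $\lambda^{(i)} - \lambda^{(i),\mathrm{final}} \leq \lambda^{(i)}$ by non-negativity. Property (iii) is immediate from \autoref{lem:single_element}(i), which guarantees $(B^{(i),k+1})^{-1}\hat{b}^{(k+1)}\geq \mathbf{0}$ at the iteration where $\lambda[B,i]_p$ was incremented. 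For (iv), telescoping gives $\sum_p \lambda_p\,p = \sum_k \hat{b}^{(k)} = b - b^{(\mathrm{final})}$, so $\norm{\sum_p \lambda_p\,p - b}_{\infty} = \norm{b^{(\mathrm{final})}}_{\infty} \leq \rho \cdot (d\Delta)^{\mathcal{O}(d^2)}$ by the halting criterion, or by the hypothesis if the process terminated immediately at step $0$.

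The main obstacle is termination. Each iteration strictly decreases $\norm{\lambda^{(i),k}}_{1}$, because $\hat{b}^{(k+1)}\neq 0$ and the invertibility of $B^{(i),k+1}$ force $(B^{(i),k+1})^{-1}\hat{b}^{(k+1)}$ to be non-zero and non-negative. Since $\mathcal{B}$ and $P'$ are finite, the set of possible subtracted vectors $\{B^{-1}p : B \in \mathcal{B}\text{ invertible},\ p \in P',\ B^{-1}p \geq \mathbf{0},\ B^{-1}p \neq \mathbf{0}\}$ is finite, and every non-zero entry appearing in it is bounded below by a uniform positive rational $\epsilon > 0$. Hence $\norm{(B^{(i),k+1})^{-1}\hat{b}^{(k+1)}}_{1} \geq \epsilon$ at every step, and the iteration must halt after at most $(\max_i \norm{\lambda^{(i)}}_{1})/\epsilon$ steps.
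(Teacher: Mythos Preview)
Your proof is correct and follows essentially the same approach as the paper: iterate \autoref{lem:single_element}, subtract the extracted $\hat b$ and the corresponding $(B^{(i)})^{-1}\hat b$ at each step, and record the counts as the $\lambda[B,i]$. The only substantive addition is your explicit termination argument via a uniform lower bound on $\norm{B^{-1}p}_1$ over the finite set of admissible $(B,p)$; the paper simply asserts the existence of a finite $\nu$ without justification, so your treatment here is slightly more careful (note that $\hat b\neq 0$ is implicit in \autoref{lem:single_element}, since $\hat b$ is taken to be a generator of a nontrivial intersection cone).
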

\begin{proof}
We want to iterate \autoref{lem:single_element}. To start the iteration let $\bar{\lambda}^{(i)}[0] = \lambda^{(i)}$  for $i\leq m$ and $\bar{b}[0] := b$. If in iteration $j$ we have that $\norm{\bar{b}[j]}_\infty > \rho \cdot (d\Delta)^{\mathcal{O}(d^2)}$, then we apply \autoref{lem:single_element} to the multisets $\bar{\lambda}^{(i)}[j]$ and there exists bases $B^{(1)}[j], \dots, B^{(m)}[j]\in \mathcal{B}$ and $\hat{b}[j]\in P'$ as stated in the lemma. For the next iteration we define $\bar{b}[j+1] := \bar{b}[j] - \hat{b}[j]$ and $\bar{\lambda}^{(i)}[j+1] := \bar{\lambda}^{(i)}[j] - ((B^{(i)}[j])^{-1}\hat{b}[j])$, where the latter is non-negative by \autoref{lem:single_element}(i). We have that
\begin{equation*}
    \norm{\sum_{p\in P}\bar{\lambda}^{(i)}_p[j+1] p -\bar{b}[j+1]}_{\infty} 
    = \norm{\sum_{p\in P}\bar{\lambda}^{(i)}_p[j] p -\bar{b}[j]}_{\infty} \leq \rho.
\end{equation*}
Let $\nu\in \nn$ be the number of iterations until the condition $\norm{\bar{b}[j]}_\infty > \rho \cdot (d\Delta)^{\mathcal{O}(d^2)}$ does not hold. In particular, the $\ell_{\infty}$-norm of $\bar{b}[\nu]$ is bounded by $\norm{\bar{b}[\nu]}_\infty \leq \rho \cdot (d\Delta)^{\mathcal{O}(d^2)}$.

For every $B\in \mathcal{B}$, $i\leq m$, and $\hat{b}\in P'$ denote by 
\begin{equation*}
    \lambda[B, i]_{\hat{b}}= \abs{\{j\leq \nu \mid \hat{b}[j]=\hat{b}\text{ and }B^{(i)}[j]=B\}}
\end{equation*}
how often basis $B$ was used for multiset $i$ and vector $\hat{b}$. By \autoref{lem:single_element}$(i)$ we have $(B^{-1}p) \geq \mathbf{0}$  if $\lambda[B,i]_p >0$  for every $B\in \mathcal{B}$,  $p\in P$, and $i\leq m$. This is property $(iii)$.
By definition we have for every $\hat{b}\in P'$
\begin{equation*}
    \abs{\{j\leq \nu \mid \hat{b}[j] = \hat b\}} = \sum_{B\in \mathcal{B}}\lambda[B, 1]_{\hat{b}} = \,\dots\, = \sum_{B\in \mathcal{B}}\lambda[B, m]_{\hat{b}}.
\end{equation*}
For every $i\leq m$ we have property $(ii)$ by
\begin{align*}
    \lambda^{(i)} = \sum_{j\leq \nu} (B^{(i)}[j])^{-1}\hat{b}[j] + \bar{\lambda}^{(i)}[\nu]
    \geq\sum_{j\leq \nu} (B^{(i)}[j])^{-1}\hat{b}[j] = \sum_{\hat{b}\in P'}\sum_{B\in\mathcal{B}}\lambda[B,i]_{\hat{b}}\cdot (B^{(i)})^{-1}\hat{b}.
\end{align*}

For property $(iv)$, the sum for the constructed multisets and $b$ differ only by the remaining vector $\bar{b}[\nu]$ since $\sum_{p\in P'}\lambda_pp = \sum_{j\leq \nu}\hat{b}[j]$ and
\begin{align*}
   \norm{\sum_{p\in P'}\lambda_pp - b}_\infty 
   = \norm{\sum_{p\in P'}\lambda_pp - (\sum_{j\leq \nu}b[j] + \bar{b}[\nu])}_\infty 
   = \norm{\bar{b}[\nu]}_\infty \leq \rho\cdot (d\Delta)^{\mathcal{O}(d^2)}.
\end{align*}{\setlength{\medskipamount}{-0.8cm}\medskip}

\phantom{.}\hfill 
\end{proof}

In the proof of \autoref{lem:small_valid_submultisets} we will construct multisets for every vertex in the multistage tree. For vertices of height $0\leq i\leq t$ the elements will be bounded by $\Delta_i := (d\Delta)^{d^{3i}}$ in the $\ell_\infty$ norm. We consider the underlying set
\begin{align*}
    P^j &:= \{ p\in \zz^{d_j} \mid \norm{p}_{\infty} \leq \Delta_j\} 
\end{align*}
and a variant, where elements are of lower dimension
\begin{align*}
    \hat{P}^j &:= \pi(P^j) = \{ p\in \zz^{d_{j+1}} \mid \norm{p}_{\infty} \leq \Delta_j\}.
\end{align*}
As before elements in newly constructed multisets are represented by a certain subset of $\hat{P}^i$ that happens to be a basis. We thus also define the set
\begin{align*}
    \mathcal{B}^{i} &:= \{ B\in \zz^{d_{i+1}\times d_{i+1}} \mid B \text{ is a basis with } \norm{B}_{\infty}\leq \Delta_i \}.
\end{align*}
Let $K_1$ be the constant in the $\mathcal{O}$-notation of \autoref{lem:almost_partitioning}$(iv)$. We will further show that the respective sums for every child differ from the initial vector $b$ only by $\rho_i := \rho \cdot (d\Delta_t)^{iK_1d^{2}}$ for every  height $0\leq i\leq t$.

\begin{Proofsketch}
The proof consists of two phases. The first phase considers the multistage tree in a bottom up fashion and \emph{constructs multisets} for every vertex of the tree using the partitioning lemma,  \autoref{lem:almost_partitioning}. Analyzing \emph{the relation of $\norm{b}_{\infty}$ and $\norm{\pi^t(i,b)}_{\infty}$}, we arrive at either the case that the multiset constructed for the root contains an element of sufficient high multiplicity or, in \emph{the other case}, there exists a vertex, where the multiset contains the vector $\mathbf{0}$ sufficiently often. In both cases the second phase uses the high multiplicity element to \emph{reconstruct submultisets} for every vertex in the subtree in a top down fashion. The reconstruction maintains the important invariant that the sum of elements in the submultisets remains the same for an index once it is defined. One might think of the reconstruction as starting from the indices for $\pi^t(i,b)$ and in each step the vector $b$ is extended by the indices $i\in v'$ for child vertices $v'$.  By this invariant,  the constructed submultisets  are valid for the subtree for every step. At last, the submultisets that are constructed for the leaves of $\mathcal{T}(A)$ are valid for the multistage tree.

We want to preview some technical details. For every vertex $v\in V$ a multiset $\lambda^v$ is considered. 
Another multiset $\hat{\lambda}^v$ considers the projection of the multiset $\lambda^{v}$. In other words $\hat{\lambda}^v$ it is the multiset obtained, when every element of $\lambda^v$ is projected. 
In the second phase, the lemma constructs valid submultisets from an element of $\lambda^v$ with high multiplicity. The reconstruction starting at a vertex $v$ of height $j\leq t$ requires an element $p$ of multiplicity $\lambda^v_p \geq \alpha_j \cdot \beta_j$, for some $\alpha_j, \beta_j\in\nn$ which are defined in the proof below. The reconstruction then extends the vector $\alpha_j \cdot p$, where scaling the vector $p$ with $\alpha_j$ will be used to scale from fractional vectors to integral vectors. In particular, \autoref{lem:almost_partitioning} is used to \emph{fractionally} partition multisets to obtain a multiset for the parent but scaling $p$ with $\alpha_j$ ensures that we find an extension of $\alpha_j \cdot p$ that is an \emph{integral} combination of elements in $T_i$ for every $i\leq n$. On the other side $\beta_j$ leaves room in the multiplicities used for the reconstruction to the multiplicities available in the multiset to use pigeonhole principle when needed.
\end{Proofsketch}

\begin{Proofof}\phantom{.}
\paragraph{Constructing multisets for the tree.} For all vertices $v\in V$ we construct multisets as follows. If $v$ is a leaf with $n(v) = i$, then let $\lambda^v \in \zz^{P^{0}}_{\geq 0}$ be the multiplicity vector representation of multiset $T_{i}$. This representation is possible since $T_{i} \subset P^{0}$ as every $\tau \in T_{i}$ has dimension $d_0$ and is bounded by $\norm{\tau}_{\infty}\leq \Delta \leq \Delta_0$. Let $\hat{\lambda}^{v}\in \zz^{\hat{P}^{0}}_{\geq 0}$ be defined for every $\hat{p}\in \hat{P}^{0}$ by
\begin{equation*}
    \hat{\lambda}^{v}_{\hat{p}} := \sum_{\substack{p \in P^{0}\\ s.t.\ \pi (p)= \hat{p}}}\lambda^v_p.
\end{equation*}
Note that $\norm{\sum_{p \in P^{0}}\hat{\lambda}^{v}_pp - \pi ^1(i,b)}_{\infty}\leq\norm{\sum_{t\in T_{i}}t - \pi (i,b)}_{\infty}\leq \rho = \rho_0$ since the multisets $T_1, \dots, T_n$ are $\rho$-valid for $\mathcal{T}(A)$.

Consider an inner vertex $v\in V$ of height $j\leq t$, where for all children $v'$ the multisets $\lambda^{v'}\in \zz^{P^{j-1}}$ and  $\hat{\lambda}^{v'} \in \zz^{\hat{P}^{j-1}}$ were already defined and 
\begin{equation*}
    \norm{\sum_{p \in P^{j-1}}\hat{\lambda}^{v'}_pp - \pi ^{j}(i,b)}_{\infty} \leq \rho_{j-1}.
\end{equation*}
We apply \autoref{lem:almost_partitioning} on the multisets $\hat{\lambda}^{v'}$ for every child $v'$ of $v$. By the lemma, there exist multisets $\lambda[B,v']$ for every $v'$ and basis $B \in \mathcal{B}^{j-1}$. Since the elements for multisets $\hat{\lambda}^{v'}$ are in $\hat{P}^{j-1}$, their $\ell_\infty$-norm is bounded by $\Delta_{j-1}$. The infinity norm of elements in the constructed multisets is hence bounded by 
\begin{equation*}
    (d_j\Delta_{j-1})^{d_j^2} 
    = (d_j(d\Delta)^{d^{3(j-1)}})^{d_j^2} 
    \leq (d\Delta)^{d^{3j}} 
    = \Delta_j.
\end{equation*}
Hence each multiset is a subset of $P^{j}$ and can be represented by a multiplicity vector $\lambda[B,v']\in \zz^{P^j}$. Define 
\begin{equation*}
    \lambda^{v} := \sum_{B \in \mathcal{B}^{j-1}}\lambda[B,v']
\end{equation*}
for any child $v'$ of $v$. Note that $\lambda^{v}$ is well-defined as the sum is equal for every child by \autoref{lem:almost_partitioning}$(i)$. 
We define the multiplicity vector $\hat{\lambda}^v\in\zz^{\hat{P}^{j}}_{\geq 0}$ similar to the leaves.
Observe that by \autoref{lem:almost_partitioning}$(iv)$ 
\begin{align*}
    \norm{\sum_{p\in P^j}\lambda^v_pp - \pi ^j(i,b)}_{\infty} 
    \leq \rho_{j-1} \cdot (d_j\Delta_{j-1})^{K_1d_j^2} 
    \leq \rho \cdot (d\Delta_t)^{jK_1d^{2}} 
    = \rho_j.
\end{align*}

\paragraph{The relation of $\norm{b}_{\infty}$ and $\norm{\pi^t(i,b)}_{\infty}$.}
Define $\nu := \text{lcm}(1,\dots, (d\Delta_{t-1})^d)$, $\alpha_i := \nu^i$, and $\beta_i := \Delta_t^{2id^2}$. We will focus on the case that for every $v\in V$ of height $j$ we have $\hat{\lambda}^v_{\mathbf{0}}\leq D_j$ for $D_j := \alpha_j\cdot \beta_j$. The other case, that is $\hat{\lambda}^v_{\mathbf{0}}> D_j$ for some vertex $v\in V$, is discussed at the end of the proof.

Let $v\in V$ be a vertex of height $j$ and consider a child $v'\in V$. Due to $\norm{ \sum_{p\in \hat{P}^{j-1}}\hat{\lambda}^{v'}_pp - \pi ^{j}(i, b) }_{\infty} \leq \rho_{j-1}$ and $\norm{ \sum_{p\in P^{j}}{\lambda}^{v}_pp - \pi ^{j}(i, b) }_{\infty} \leq \rho_{j}$ we have that 
\begin{equation*}
    \norm{\sum_{p\in \hat{P}^{j-1}}\hat{\lambda}^{v'}_pp - \sum_{p\in P^{j}}{\lambda}^{v}_pp}_{\infty} \leq \rho_{j} + \rho_{j-1}.
\end{equation*}
By reverse triangle inequality we get that
\begin{equation*}
    \norm{\sum_{p\in \hat{P}^{j-1}}{\hat{\lambda}}^{v'}_pp}_{\infty} 
    \leq \norm{\sum_{p\in {P}^{j}}{\lambda}^{v}_pp }_{\infty} + \rho_{j} + \rho_{j-1} 
    \leq \Delta_{j}\norm{{\lambda}^{v}}_{1} + \rho_{j} + \rho_{j-1}.
    \stepcounter{equation}\tag{\theequation}\label{eq:lambda_v_and_lambda_v'_1}
\end{equation*}
In order to compare the $\ell_1$-norms of $\lambda^v$ and $\lambda^{v'}$, we consider the left part of the above inequality. By standard arguments we get that
\begin{align*}
    \norm{\sum_{p\in \hat{P}^{j-1}}{\hat{\lambda}}^{v'}_pp}_{\infty} 
    = \norm{\sum_{p\in \hat{P}^{j-1}\setminus \{\mathbf{0}\}}{\hat{\lambda}}^{v'}_pp}_{\infty} 
    \geq \frac{\sum_{p\in \hat{P}^{j-1}\setminus\{0\}}{\hat{\lambda}}^{v'}_p}{d}
    \stackrel{\hat{\lambda}^{v'}_{\mathbf{0}}\leq D_{j-1}}{\geq} \frac{\norm{\hat{\lambda}^{v'}}_{1} - D_{j-1}}{d}.
    \stepcounter{equation}\tag{\theequation}\label{eq:lambda_v_and_lambda_v'_2}
\end{align*}
As a combination of $(\ref{eq:lambda_v_and_lambda_v'_1})$ and $(\ref{eq:lambda_v_and_lambda_v'_2})$ it holds that
\begin{align*}
    \frac{\norm{\hat{\lambda}^{v'}}_{1} - D_{j-1}}{d_{j-1}}
    \leq \Delta_{j}\norm{{\lambda}^{v}}_1 + \rho_{j} + \rho_{j-1} 
    \leq \Delta_{j}(\sum_{p\in P^{j}\setminus\{\mathbf{0}\}}\abs{{\lambda}_p^{v}} + D_{j}) + \rho_{j} + \rho_{j-1}
    \stepcounter{equation}\tag{\theequation}\label{eq:lambda_v_and_lambda_v'}.
\end{align*}
To compare the $\ell_{\infty}$ norms of $\pi^j(i,b)$ and $\pi ^{j-1}(i,b)$, they are compared to the $\ell_1$ norms of $\lambda^v$ and $\lambda^{v'}$, respectively. 
Note that the $\ell_1$ norms of $\lambda^v$ and $\hat{\lambda}^{v}$ are equal by definition.
Using the distance to $b$ we get
\begin{equation*}
    \norm{\hat{\lambda}^{v'}}_1 \geq \frac{\norm{\pi ^{j-1}(i,b)}_{\infty} - \rho_{j-1}}{\Delta_{j-1}} \quad \text{and} \quad
     \sum_{p\in P^{j}\setminus\{0\}}\abs{\lambda_p^{v}} \leq d_{j-1}(\norm{\pi ^{j}(i,b)}_{\infty} + \rho_{j}).
\end{equation*}
Combining the above with $(\ref{eq:lambda_v_and_lambda_v'})$ yields
\begin{align*}
    \norm{\pi ^{j-1}(i,b)}_{\infty} 
    &\leq \Delta_{j}^2d_{j-1}(\norm{\pi ^{j}(i,b)}_{\infty} + 2\rho_{j} + 2\rho_{j-1} + D_{j} + D_{j-1}) 
    \\
    & \leq \Delta_t^2d(\norm{\pi ^{j}(i,b)}_{\infty} + 4\rho_t + 2 D_t).
\end{align*}
For every $i\leq n$ by induction it holds that
\begin{equation*}
    \norm{b}_{\infty} \leq \Delta_t^{2t}d^t_0(\norm{\pi ^{t}(i,b)}_{\infty} + 4t\rho_t + 2t D_t).
\end{equation*}

\paragraph{Reconstructing submultisets.} We remain in the case that $\lambda^v_{\mathbf{0}}\leq D_j$ for every vertex. There exists $p\in P^t\setminus\{0\}$ with
\begin{equation*}
    \lambda^r_{p} \geq \frac{1}{\Delta^d_t}\norm{\lambda^r}_{1} 
    \geq  \frac{1}{\Delta^d_t}\sum_{p\in P^t\setminus\{0\}}\abs{\lambda^t_p} 
    \geq \frac{1}{\Delta^{d+1}_t}(\norm{\pi ^t(i,b)}_{\infty} - \rho_t).
\end{equation*}
If the assumption on the size of $b$
\begin{equation*}
    \norm{b}_{\infty} \geq \Delta^{2t}_td^t(\Delta^{d+1}_t(\alpha_t\beta_t+\rho_t) + 4t\rho_t+2tD_t) = \rho \cdot 2^{(d\Delta)^{\mathcal{O}(d^{3t})}}
\end{equation*}
holds, then $\norm{\pi ^t(i,b)}_{\infty} \geq \Delta^{d+1}_t(\alpha_t\beta_t + \rho_t)$, for all $i\leq n$, and $\lambda^r_{p} \geq \alpha_t\beta_t$. We reconstruct submultisets from the root to every vertex and finally for the leaves. We start with the multiset $\gamma^r := e_p$ for the root.

\claim{ Consider a vertex $v\in V$ of height $1 \leq j \leq t$. Then for every $\gamma^v \in\zz^{P^j}_{\geq 0}$ with $\alpha_j\beta_j\cdot \gamma^{v} \leq \lambda^v$ 
there exists $\gamma^{v'}\in \zz^{{P}^{j-1}}_{\geq 0}$ for every child $v'\in V$ such that
\begin{enumerate}[(i)]
    \item $\alpha_{j-1}\beta_{j-1}\cdot \gamma^{v'}\leq \lambda^{v'}$,
    \item $\alpha_{j-1}\norm{\gamma^{v'}}_1\leq d(d\Delta_t)^{d} \alpha_{j} \norm{\gamma^{v}}_1$, and
    \item $\pi (\alpha_{j-1}\sum_{p\in {P}^{j-1}}\gamma^{v'}_pp) = \alpha_{j}\sum_{p\in {P}^{j}}\gamma^v_pp$.
\end{enumerate}  }
{
Consider $\gamma^v \in\zz^{P^j}_{\geq 0}$ with $\alpha_j\beta_j\gamma_p^v \leq \lambda_p^v$ for every $p \in P^j$. In order to prove the claim, first the basis representation of the multiset $\lambda^{v'}$ for every child $v'$ is considered and a basis of sufficient multiplicity $\lambda[B,v']$ is found for each vector in the multiset. Then the vectors are extended to dimension $d_{j-1}$ using a vector with sufficient multiplicity in $\lambda^{v'}$.  Finally, the properties of the claim are verified.

We want to use the representation of the elements in $\lambda^v$ to obtain elements in $\lambda^{v'}$ for any child $v'$. Each representation is defined by the used basis. 
Hence, we start with the basis representation. By definition we have 
\begin{equation*}
    \lambda^{v} = \sum_{B \in \mathcal{B}^{j-1}}\lambda[B,v']
\end{equation*}
for every child $v' \in V$ of $v$. There are at most $\abs{\mathcal{B}^{j-1}}\leq \Delta_{j-1}^{d_{j}^2}$ bases in the set. Hence, for every $p \in P^{j}$ there exists a basis $B^p\in \mathcal{B}^{j-1}$ with $\lambda^v_p / \Delta_{t}^{d^2} \leq \lambda[B^p,v']_p$ by pigeonhole principle. Hence we get 
\begin{align}\label{eq:basispigeonhole}
    \frac{1}{\Delta_t^{d^2}}\alpha_j\beta_j\gamma_p^v 
    \leq \frac{1}{\Delta_t^{d^2}}\lambda_{p}^v 
    \leq \lambda[B^p,v']_{p}.
\end{align}
After we selected a basis for every vector, we want to use that the basis and the vector represent a fractional submultiset. To combine the chosen representations in a new multiset, we define a vector $\hat{\gamma}^{v'}\in \zz^{\hat{P}^{j-1}}_{\geq 0}$ by 
\begin{equation*}
    \hat{\gamma}^{v'} :=\nu \cdot \sum_{p \in P^{j}} \gamma^v_{p}\cdot ((B^p)^{-1}p).
\end{equation*} 
The definition is simply the sum of how many times we require which element of the child multiset to represent our current multiset.
Note that $\hat{\gamma}^{v'}_{p'} \in \zz_{\geq 0}$ since all vectors and matrices are integral,  by Cramer's rule $(B^p)^{-1}$ has denominators at most $\abs{\det(B^p)} \leq (d\Delta_{t-1})^d$ which divides $\nu$, and $(B^p)^{-1}p \geq \mathbf{0}$ by \autoref{lem:almost_partitioning}$(iii)$. 

Again, we verify that this representation is in a submultiset relation to the multiset of each child. By \autoref{lem:almost_partitioning}$(ii)$ the following inequality is given
\begin{align}\label{eq:lem43iii}
    \sum_{p \in P^{{j}}}\sum_{\substack{B\in\mathcal{B}^j\\ s.t.\ p'\in B}}\lambda[B,v']_p ((B^p)^{-1}p)_{p'} \leq \hat{\lambda}^{v'}_{p'}.
\end{align}
Using $(\ref{eq:lem43iii})$, the bound in $(\ref{eq:basispigeonhole})$ can be extended to $\hat{\gamma}^{v'}$ and $\hat{\lambda}^{v'}$ as follows
\begin{align*}
            \frac{1}{\Delta_t^{d^2}}\alpha_{j-1}\beta_{j} \cdot \hat{\gamma}_{p'}^{v'}
            =&\frac{1}{\Delta_t^{d^2}}\alpha_{j-1}\beta_{j} \cdot \nu \cdot \sum_{p \in P^{{j}}} \gamma^v_p \cdot  ((B^p)^{-1}p)_{p'} \\
            \stackrel{(\ref{eq:basispigeonhole})}{\leq}& \sum_{p \in P^{{j}}} \lambda[B^{p},v']_p ((B^p)^{-1}p)_{p'} \\
            \leq& \sum_{p \in P^{{j}}}\sum_{B\in\mathcal{B}^j}\lambda[B,v']_p ((B^p)^{-1}p)_{p'}\\
            \stackrel{(\ref{eq:lem43iii})}{\leq}& \hat{\lambda}^{v'}_{p'}. 
            \stepcounter{equation}\tag{\theequation}\label{eq:newmultdefinition}
        \end{align*}
        
Next, each vector is extended to dimension $d_{j-1}$ in order to reverse the projection from $\lambda^{v'}$ to $\hat{\lambda}^{v'}$.
For every $p\in \hat{P}^{j-1}$ there are at most $\Delta_{j-1}^{s_{t-j+1}}\leq \Delta_t^{d}$ vectors $p''\in P^{j-1}$ which are projected to $p'$, \ie{} $\pi (p'')=p'$, and by definition 
\[\hat{\lambda}^{v'}_{p'} = \sum_{\substack{p'' \in P^{j+1}\\ s.t.\ \pi (p'')=p'}}\lambda^{v'}_{p''}.\]
Hence, for every $p' \in \hat{P}^{j-1}$ there exists $p''\in P^{j-1}$ where $\frac{1}{\Delta_t^d}\hat{\lambda}^{v'}_{p'} \leq \lambda^{v'}_{p''}$ and
\begin{align}\label{eq:newgammaineq}
    \frac{1}{\Delta_t^{2d^2}}\alpha_{j-1}\beta_{j} \cdot \hat{\gamma}_{p'}^{v'} \leq \frac{1}{\Delta_t^d}\hat{\lambda}^{v'}_{p'} \leq \lambda^{v'}_{p''}.
\end{align}
Let $f:\hat{P}^{j-1}\mapsto P^{{j-1}}$ map any $p'\in \hat{P}^{j-1}$ to some $p'' \in P^{{j-1}}$ such that $\pi (p'') = p'$ and $p''$ satisfies $(\ref{eq:newgammaineq})$. 
We define the multiset of extended elements $\gamma^{v'}_{p''} \in\zz^{P^{j-1}}_{\geq 0}$ by
\[\gamma^{v'}_{p''} := \begin{cases} \hat{\gamma}^{v'}_{p'} & \text{ if }p''= f(\pi (p'')) \\ 0 & \text{else.}\end{cases}\] 
For the claim it remains to show that $\gamma^{v'}$ satisfies the claimed properties. In particular, for every $p'' \in P^{j+1}$ property $(i)$ holds since
\begin{align}\label{eq:lambdabargamma}
    \alpha_{j-1}\beta_{j-1} \cdot \gamma_{p''}^{v'} \leq \frac{1}{\Delta_t^{2d^2}}\alpha_{j-1}\beta_{j} \cdot \gamma_{p''}^{v'} \leq \lambda^{v'}_{p''}.
\end{align}
Every $B\in \mathcal{B}^{j-1}$ has $\norm{B}_{\infty}\leq \Delta_{j-1}$. Hence, for every $p\in P^j$ we have by Cramer's rule that
\begin{equation*}
    \norm{(B^p)^{-1}p}_1 \leq d\cdot \norm{(B^p)^{-1}p}_{\infty} \leq d (d\Delta_t)^{d}.
\end{equation*}
By the above we can also bound
\begin{align*}
    \alpha_{j-1} \norm{\hat{\gamma}^{v'}}_1 
    = \alpha_{j} \sum_{p'\in \hat{P}^j} \sum_{p \in P^{{j}}} \gamma^v_{p}( (B^p)^{-1}p)_{p'} 
    \leq d(d\Delta_t)^{d} \alpha_{j} \norm{\gamma^{v}}_1.
\end{align*} 
By the definition of $\hat{\gamma}^{v'}$, we get $\norm{\gamma^{v'}}_1 = \norm{\hat{\gamma}^{v'}}_1$. Combined we can bound the size of the constructed multisets for each child $\alpha_{j-1} \norm{\gamma^{v'}}_1  \leq d(d\Delta_t)^{d} \alpha_{j} \norm{\gamma^{v}}_1$, which is property (ii).

It remains to prove property (iii) $\pi (\alpha_{j-1}\sum_{p\in {P}^{j-1}}\gamma^{v'}_p p) = \alpha_{j}\sum_{p\in {P}^{j}}\gamma^v_p p$, which follows by the definitions of $\hat{\gamma}^{v'}$ and $\gamma^{v'}$.
%
First, by the definition of $\gamma^{v'}$ the following holds 
\begin{align*} 
     \pi (\alpha_{j-1} \sum_{p'' \in P^{{j-1}}}\gamma^{v'}_{p''} p'') 
     =\alpha_{j-1} \sum_{p'' \in P^{{j-1}}}\gamma^{v'}_{p''} \pi (p'') 
     = \alpha_{j-1}\sum_{p'\in \hat{P}^{j-1}}\hat{\gamma}^{v'}_{p'} p'. \stepcounter{equation}\tag{\theequation}\label{eq:combinationthefirst}
\end{align*}
Second, the definition of $\hat{\gamma}^{v'}$ yields property $(iii)$
\begin{align*}
    \alpha_{j-1}\sum_{p'\in \hat{P}^{j-1}}\hat{\gamma}^{v'}_{p'} p'  
    =\alpha_{j-1}\sum_{p \in P^{{j}}}\sum_{p' \in B^{p}} \gamma^v_{p}(\nu (B^p)^{-1}p)_{p'} p'
    =\alpha_{j}\sum_{p \in P^{{j}}}\gamma^v_{p}B^{p}(B^p)^{-1}p
    =\alpha_j \sum_{p \in P^j}\gamma^{v}_p p.
    \stepcounter{equation}\tag{\theequation}\label{eq:combinationthesecond}
\end{align*}
Hence, combining the equalities (\ref{eq:combinationthefirst}) and (\ref{eq:combinationthesecond}) yields \[\pi (\alpha_{j-1} \sum_{p'' \in P^{{j-1}}}\gamma^{v'}_{p''} p'') = \alpha_j \sum_{p \in P^j}\gamma^{v}_p p.\] 
{\setlength{\medskipamount}{-0.88cm}\medskip}

\phantom{.}\hfill 
}    
We apply the claim iteratively from the root to the leaves on the inner vertices of $\mathcal{T}(A)$. In particular, we construct multisets for every leaf $v\in V$ that satisfy the properties. The constructed multiset of $v$ is a submultiset of $T_{n(v)}$ since by claim property $(i)$ we have that
\begin{equation*}
    \gamma^v = \alpha_0 \beta_0 \cdot \gamma^v \stackrel{(i)}{\leq} \lambda^v
\end{equation*}
and $\lambda^v$ is defined as the multiplicity representation of multiset $T_{n(v)}$.
\paragraph{Verifying the construction.}
Due to claim property $(ii)$ the reconstruction of submultisets grows at most
\begin{equation*}
    \alpha_{j-1} \norm{\gamma^{v'}}_1 \leq d (d\Delta_t)^{d}\alpha_j\norm{\gamma^v}_1
\end{equation*}
from any vertex $v$ to a child $v'$. For a leaf $v$ we have by induction that 
\begin{equation}\label{eq:submultiset_size}
    \norm{\gamma^v}_1 = \alpha_0\norm{\gamma^v}_1 \leq (d(d\Delta_t)^{d})^t \cdot \alpha_t \cdot \norm{\gamma^r}_1 \leq (d\Delta)^{d^{\mathcal{O}(3t)}} \cdot 3^{t(d\Delta_{t-1})^{d}} \cdot 1\leq  2^{(d\Delta)^{\mathcal{O}(d^{3t})}}.
\end{equation}

To show that the submultisets are valid for $\mathcal{T}(A)$ we construct a vector $b\in \zz^{col(A)}$. The key observation here is property $(iii)$ from the claim, which ensures that vertices connected in some subtree have equal sums in the indices of the root of the subtree. In each iteration the sum for the parent defines the sum for every child in the respective indices. 
For an index $k\in col(A)$ there exists $v\in V$ of height $j\leq t$ with $k\in v$.
For precise indexing, let $k'$ be the index of $b_k$ in the $d_j$-dimensional vector $\pi^j(i,b)$ for any $i=n(v')$ and $v'$ leaf in the subtree of $v$ .
Define $b_k := \alpha_j \sum_{p\in P^j} \gamma^v_pp_{k'}$. 
Next, we prove that the submultisets are valid for $\mathcal{T}(A)$ regarding $b$.
For a leaf $v\in V$ with $i=n(v)$ consider an index $k\in v'$ for a vertex $v'\in V$ of height $j\leq t$ on the path from the root to $v$. From claim property $(iii)$ used inductively, we get
\begin{equation*}
    \sum_{p \in P^0}\gamma^v_pp_{k'} 
    = \alpha_0 \sum_{p \in P^0}\gamma^v_pp_{k'}
    \stackrel{(iii)}{=}  \alpha_{j} \sum_{p \in P^j}\gamma^{v'}_pp_{k'} 
    \stackrel{\text{def.}}{=} b_k.
\end{equation*}
As this holds for arbitrary indices on the path from the leaf to the root, we have that the constructed submultisets are valid with respect to $b$, \ie
\begin{equation*}
    \sum_{p \in P^0}\gamma^v_p = \pi (i,b)
\end{equation*}
as the vector $\pi (i,b)$ is composed of the indices of $b$ that lie on the path from the root to leaf $v$. Due to the bounded size of the constructed submultisets, the infinity norm of $b$ is bounded by 
\begin{equation*}
    \norm{b}_{\infty} \leq \Delta \cdot \max_{\text{leaf }v}\norm{\gamma^v}_1 \stackrel{(\ref{eq:submultiset_size})}{\leq} \Delta 2^{(d\Delta)^{\mathcal{O}(d^{3t})}} = 2^{(d\Delta)^{\mathcal{O}(d^{3t})}}.
\end{equation*}
\paragraph{The other case.} 
Let us now turn to the case that there exists a vertex $v\in V$ with $\hat{\lambda}^v_{\mathbf{0}}> D_j = \alpha_j\beta_j$. 
We proceed very similar to the above but instead of reconstructing from the root, we reconstruct starting from $v$ as we found a vector, that is $\mathbf{0}$, of high multiplicity.
In this case we can apply the claim to the unit vector $\gamma^{v} := e_{\mathbf{0}}$ as it satisfies $\alpha_j\beta_je_{\mathbf{0}} \leq \lambda^v$.
By the claim we construct valid submultisets for the subtree rooted in $v$ of bounded size, similar to the argumentation above. The submultisets for the subtree are non-empty since  $\alpha_j$ extensions of the representation of vector $\mathbf{0}$ are included. Define $b$ as above for indices in the subtree of $v$ and otherwise by $b_i := 0$. By claim property $(iii)$ we have that $\pi ^j(\sum_{p \in P^0}\gamma^{v'}_pp) = \mathbf{0}$, hence if we set for $i$ not in the subtree the multiset $\gamma^v := \mathbf{0}$ as empty, then the submultisets are also valid for $\mathcal{T}(A)$ with respect to $b$.
\end{Proofof}

\bibliography{multistage}
\end{document}